\newcommand{\subpostscript}[2]{
    \setlength{\epsfxsize}{#2}
    \epsfbox{#1}
}
\def\PsfigVersion{1.9}
\let\LaTeXAtSign=\@
\let\@=\relax
\edef\psfigRestoreAt{\catcode`\@=\number\catcode`@\relax}
\newwrite\@unused
\def\ps@typeout#1{{\let\protect\string\immediate\write\@unused{#1}}}
\def\figurepath{./}
\def\@nnil{\@nil}
\def\@empty{}
\def\@psdonoop#1\@@#2#3{}
\def\@psdo#1:=#2\do#3{\edef\@psdotmp{#2}\ifx\@psdotmp\@empty \else
    \expandafter\@psdoloop#2,\@nil,\@nil\@@#1{#3}\fi}
\def\@psdoloop#1,#2,#3\@@#4#5{\def#4{#1}\ifx #4\@nnil \else
       #5\def#4{#2}\ifx #4\@nnil \else#5\@ipsdoloop #3\@@#4{#5}\fi\fi}
\def\@ipsdoloop#1,#2\@@#3#4{\def#3{#1}\ifx #3\@nnil 
       \let\@nextwhile=\@psdonoop \else
      #4\relax\let\@nextwhile=\@ipsdoloop\fi\@nextwhile#2\@@#3{#4}}
\def\@tpsdo#1:=#2\do#3{\xdef\@psdotmp{#2}\ifx\@psdotmp\@empty \else
    \@tpsdoloop#2\@nil\@nil\@@#1{#3}\fi}
\def\@tpsdoloop#1#2\@@#3#4{\def#3{#1}\ifx #3\@nnil 
       \let\@nextwhile=\@psdonoop \else
      #4\relax\let\@nextwhile=\@tpsdoloop\fi\@nextwhile#2\@@#3{#4}}
\newdimen\fboxrule
\newdimen\fboxsep
\newdimen\ps@tempdima
\newbox\ps@tempboxa
\long\def\fbox#1{\leavevmode\setbox\ps@tempboxa\hbox{#1}\ps@tempdima\fboxrule
    \advance\ps@tempdima \fboxsep \advance\ps@tempdima \dp\ps@tempboxa
   \hbox{\lower \ps@tempdima\hbox
  {\vbox{\hrule height \fboxrule
          \hbox{\vrule width \fboxrule \hskip\fboxsep
          \vbox{\vskip\fboxsep \box\ps@tempboxa\vskip\fboxsep}\hskip 
                 \fboxsep\vrule width \fboxrule}
                 \hrule height \fboxrule}}}}
\newread\ps@stream
\newif\ifnot@eof       % continue looking for the bounding box?
\newif\if@noisy        % report what you're making?
\newif\if@atend        % %%BoundingBox: has (at end) specification
\newif\if@psfile       % does this look like a PostScript file?
\global\gdef\epsf@start{%!}}
\def\epsf@PS{PS}
\def\epsf@getbb#1{%
%
%   The first thing we need to do is to open the
%   PostScript file, if possible.
%
\openin\ps@stream=#1
\ifeof\ps@stream\ps@typeout{Error, File #1 not found}\else
%
%   Okay, we got it. Now we'll scan lines until we find one that doesn't
%   start with %. We're looking for the bounding box comment.
%
   {\not@eoftrue \chardef\other=12
    \def\do##1{\catcode`##1=\other}\dospecials \catcode`\ =10
    \loop
       \if@psfile
	  \read\ps@stream to \epsf@fileline
       \else{
	  \obeyspaces
          \read\ps@stream to \epsf@tmp\global\let\epsf@fileline\epsf@tmp}
       \fi
       \ifeof\ps@stream\not@eoffalse\else
%
%   Check the first line for `%!'.  Issue a warning message if its not
%   there, since the file might not be a PostScript file.
%
       \if@psfile\else
       \expandafter\epsf@test\epsf@fileline:. \\%
       \fi
%
%   We check to see if the first character is a % sign;
%   if so, we look further and stop only if the line begins with
%   `%%BoundingBox:' and the `(atend)' specification was not found.
%   That is, the only way to stop is when the end of file is reached,
%   or a `%%BoundingBox: llx lly urx ury' line is found.
%
          \expandafter\epsf@aux\epsf@fileline:. \\%
       \fi
   \ifnot@eof\repeat
   }\closein\ps@stream\fi}%
%
% This tests if the file we are reading looks like a PostScript file.
%
\long\def\epsf@test#1#2#3:#4\\{\def\epsf@testit{#1#2}
			\ifx\epsf@testit\epsf@start\else
\ps@typeout{Warning! File does not start with `\epsf@start'.  It may not be a PostScript file.}
			\fi
			\@psfiletrue} % don't test after 1st line
%
%   We still need to define the tricky \epsf@aux macro. This requires
%   a couple of magic constants for comparison purposes.
%
{\catcode`\%=12\global\let\epsf@percent=%\global\def\epsf@bblit{%BoundingBox}}
%
%
%   So we're ready to check for `%BoundingBox:' and to grab the
%   values if they are found.  We continue searching if `(at end)'
%   was found after the `%BoundingBox:'.
%
\long\def\epsf@aux#1#2:#3\\{\ifx#1\epsf@percent
   \def\epsf@testit{#2}\ifx\epsf@testit\epsf@bblit
	\@atendfalse
        \epsf@atend #3 . \\%
	\if@atend	
	   \if@verbose{
		\ps@typeout{psfig: found `(atend)'; continuing search}
	   }\fi
        \else
        \epsf@grab #3 . . . \\%
        \not@eoffalse
        \global\no@bbfalse
        \fi
   \fi\fi}%
%
%   Here we grab the values and stuff them in the appropriate definitions.
%
\def\epsf@grab #1 #2 #3 #4 #5\\{%
   \global\def\epsf@llx{#1}\ifx\epsf@llx\empty
      \epsf@grab #2 #3 #4 #5 .\\\else
   \global\def\epsf@lly{#2}%
   \global\def\epsf@urx{#3}\global\def\epsf@ury{#4}\fi}%
%
% Determine if the stuff following the %%BoundingBox is `(atend)'
% J. Daniel Smith.  Copied from \epsf@grab above.
%
\def\epsf@atendlit{(atend)} 
\def\epsf@atend #1 #2 #3\\{%
   \def\epsf@tmp{#1}\ifx\epsf@tmp\empty
      \epsf@atend #2 #3 .\\\else
   \ifx\epsf@tmp\epsf@atendlit\@atendtrue\fi\fi}

% End of file reading stuff from epsf.tex
%%%%%%%%%%%%%%%%%%%%%%%%%%%%%%%%%%%%%%%%%%%%%%%%%%%%%%%%%%%%%%%%%%%

%%%%%%%%%%%%%%%%%%%%%%%%%%%%%%%%%%%%%%%%%%%%%%%%%%%%%%%%%%%%%%%%%%%
% trigonometry stuff from "trig.tex"
\chardef\psletter = 11 % won't conflict with \begin{letter} now...
\chardef\other = 12

\newif \ifdebug %%% turn me on to see TeX hard at work ...
\newif\ifc@mpute %%% don't need to compute some values
\c@mputetrue % but assume that we do

\let\then = \relax
\def\r@dian{pt }
\let\r@dians = \r@dian
\let\dimensionless@nit = \r@dian
\let\dimensionless@nits = \dimensionless@nit
\def\internal@nit{sp }
\let\internal@nits = \internal@nit
\newif\ifstillc@nverging
\def \Mess@ge #1{\ifdebug \then \message {#1} \fi}

{ %%% Things that need abnormal catcodes %%%
	\catcode `\@ = \psletter
	\gdef \nodimen {\expandafter \n@dimen \the \dimen}
	\gdef \term #1 #2 #3%
	       {\edef \t@ {\the #1}%%% freeze parameter 1 (count, by value)
		\edef \t@@ {\expandafter \n@dimen \the #2\r@dian}%
				   %%% freeze parameter 2 (dimen, by value)
		\t@rm {\t@} {\t@@} {#3}%
	       }
	\gdef \t@rm #1 #2 #3%
	       {{%
		\count 0 = 0
		\dimen 0 = 1 \dimensionless@nit
		\dimen 2 = #2\relax
		\Mess@ge {Calculating term #1 of \nodimen 2}%
		\loop
		\ifnum	\count 0 < #1
		\then	\advance \count 0 by 1
			\Mess@ge {Iteration \the \count 0 \space}%
			\Multiply \dimen 0 by {\dimen 2}%
			\Mess@ge {After multiplication, term = \nodimen 0}%
			\Divide \dimen 0 by {\count 0}%
			\Mess@ge {After division, term = \nodimen 0}%
		\repeat
		\Mess@ge {Final value for term #1 of 
				\nodimen 2 \space is \nodimen 0}%
		\xdef \Term {#3 = \nodimen 0 \r@dians}%
		\aftergroup \Term
	       }}
	\catcode `\p = \other
	\catcode `\t = \other
	\gdef \n@dimen #1pt{#1} %%% throw away the ``pt''
}

\def \Divide #1by #2{\divide #1 by #2} %%% just a synonym

\def \Multiply #1by #2%%% allows division of a dimen by a dimen
       {{%%% should really freeze parameter 2 (dimen, passed by value)
	\count 0 = #1\relax
	\count 2 = #2\relax
	\count 4 = 65536
	\Mess@ge {Before scaling, count 0 = \the \count 0 \space and
			count 2 = \the \count 2}%
	\ifnum	\count 0 > 32767 %%% do our best to avoid overflow
	\then	\divide \count 0 by 4
		\divide \count 4 by 4
	\else	\ifnum	\count 0 < -32767
		\then	\divide \count 0 by 4
			\divide \count 4 by 4
		\else
		\fi
	\fi
	\ifnum	\count 2 > 32767 %%% while retaining reasonable accuracy
	\then	\divide \count 2 by 4
		\divide \count 4 by 4
	\else	\ifnum	\count 2 < -32767
		\then	\divide \count 2 by 4
			\divide \count 4 by 4
		\else
		\fi
	\fi
	\multiply \count 0 by \count 2
	\divide \count 0 by \count 4
	\xdef \product {#1 = \the \count 0 \internal@nits}%
	\aftergroup \product
       }}

\def\r@duce{\ifdim\dimen0 > 90\r@dian \then   % sin(x+90) = sin(180-x)
		\multiply\dimen0 by -1
		\advance\dimen0 by 180\r@dian
		\r@duce
	    \else \ifdim\dimen0 < -90\r@dian \then  % sin(-x) = sin(360+x)
		\advance\dimen0 by 360\r@dian
		\r@duce
		\fi
	    \fi}

\def\Sine#1%
       {{%
	\dimen 0 = #1 \r@dian
	\r@duce
	\ifdim\dimen0 = -90\r@dian \then
	   \dimen4 = -1\r@dian
	   \c@mputefalse
	\fi
	\ifdim\dimen0 = 90\r@dian \then
	   \dimen4 = 1\r@dian
	   \c@mputefalse
	\fi
	\ifdim\dimen0 = 0\r@dian \then
	   \dimen4 = 0\r@dian
	   \c@mputefalse
	\fi
	\ifc@mpute \then
        	% convert degrees to radians
		\divide\dimen0 by 180
		\dimen0=3.141592654\dimen0
		\dimen 2 = 3.1415926535897963\r@dian %%% a well-known constant
		\divide\dimen 2 by 2 %%% we only deal with -pi/2 : pi/2
		\Mess@ge {Sin: calculating Sin of \nodimen 0}%
		\count 0 = 1 %%% see power-series expansion for sine
		\dimen 2 = 1 \r@dian %%% ditto
		\dimen 4 = 0 \r@dian %%% ditto
		\loop
			\ifnum	\dimen 2 = 0 %%% then we've done
			\then	\stillc@nvergingfalse 
			\else	\stillc@nvergingtrue
			\fi
			\ifstillc@nverging %%% then calculate next term
			\then	\term {\count 0} {\dimen 0} {\dimen 2}%
				\advance \count 0 by 2
				\count 2 = \count 0
				\divide \count 2 by 2
				\ifodd	\count 2 %%% signs alternate
				\then	\advance \dimen 4 by \dimen 2
				\else	\advance \dimen 4 by -\dimen 2
				\fi
		\repeat
	\fi		
			\xdef \sine {\nodimen 4}%
       }}

% Now the Cosine can be calculated easily by calling \Sine
\def\Cosine#1{\ifx\sine\UnDefined\edef\Savesine{\relax}\else
		             \edef\Savesine{\sine}\fi
	{\dimen0=#1\r@dian\advance\dimen0 by 90\r@dian
	 \Sine{\nodimen 0}
	 \xdef\cosine{\sine}
	 \xdef\sine{\Savesine}}}	      
% end of trig stuff
%%%%%%%%%%%%%%%%%%%%%%%%%%%%%%%%%%%%%%%%%%%%%%%%%%%%%%%%%%%%%%%%%%%%

\def\psdraft{
	\def\@psdraft{0}
	%\ps@typeout{draft level now is \@psdraft \space . }
}
\def\psfull{
	\def\@psdraft{100}
	%\ps@typeout{draft level now is \@psdraft \space . }
}

\psfull

\newif\if@scalefirst
\def\psscalefirst{\@scalefirsttrue}
\def\psrotatefirst{\@scalefirstfalse}
\psrotatefirst

\newif\if@draftbox
\def\psnodraftbox{
	\@draftboxfalse
}
\def\psdraftbox{
	\@draftboxtrue
}
\@draftboxtrue

\newif\if@prologfile
\newif\if@postlogfile
\def\pssilent{
	\@noisyfalse
}
\def\psnoisy{
	\@noisytrue
}
\psnoisy
%%% These are for the option list.
%%% A specification of the form a = b maps to calling \@p@@sa{b}
\newif\if@bbllx
\newif\if@bblly
\newif\if@bburx
\newif\if@bbury
\newif\if@height
\newif\if@width
\newif\if@rheight
\newif\if@rwidth
\newif\if@angle
\newif\if@clip
\newif\if@verbose
\def\@p@@sclip#1{\@cliptrue}

\newif\if@decmpr

%%% GDH 7/26/87 -- changed so that it first looks in the local directory,
%%% then in a specified global directory for the ps file.
%%% RPR 6/25/91 -- changed so that it defaults to user-supplied name if
%%% boundingbox info is specified, assuming graphic will be created by
%%% print time.
%%% TJD 10/19/91 -- added bbfile vs. file distinction, and @decmpr flag

\def\@p@@sfigure#1{\def\@p@sfile{null}\def\@p@sbbfile{null}
	        \openin1=#1.bb
		\ifeof1\closein1
	        	\openin1=\figurepath#1.bb
			\ifeof1\closein1
			        \openin1=#1
				\ifeof1\closein1%
				       \openin1=\figurepath#1
					\ifeof1
					   \ps@typeout{Error, File #1 not found}
						\if@bbllx\if@bblly
				   		\if@bburx\if@bbury
			      				\def\@p@sfile{#1}%
			      				\def\@p@sbbfile{#1}%
							\@decmprfalse
				  	   	\fi\fi\fi\fi
					\else\closein1
				    		\def\@p@sfile{\figurepath#1}%
				    		\def\@p@sbbfile{\figurepath#1}%
						\@decmprfalse
	                       		\fi%
			 	\else\closein1%
					\def\@p@sfile{#1}
					\def\@p@sbbfile{#1}
					\@decmprfalse
			 	\fi
			\else
				\def\@p@sfile{\figurepath#1}
				\def\@p@sbbfile{\figurepath#1.bb}
				\@decmprtrue
			\fi
		\else
			\def\@p@sfile{#1}
			\def\@p@sbbfile{#1.bb}
			\@decmprtrue
		\fi}

\def\@p@@sfile#1{\@p@@sfigure{#1}}

\def\@p@@sbbllx#1{
		%\ps@typeout{bbllx is #1}
		\@bbllxtrue
		\dimen100=#1
		\edef\@p@sbbllx{\number\dimen100}
}
\def\@p@@sbblly#1{
		%\ps@typeout{bblly is #1}
		\@bbllytrue
		\dimen100=#1
		\edef\@p@sbblly{\number\dimen100}
}
\def\@p@@sbburx#1{
		%\ps@typeout{bburx is #1}
		\@bburxtrue
		\dimen100=#1
		\edef\@p@sbburx{\number\dimen100}
}
\def\@p@@sbbury#1{
		%\ps@typeout{bbury is #1}
		\@bburytrue
		\dimen100=#1
		\edef\@p@sbbury{\number\dimen100}
}
\def\@p@@sheight#1{
		\@heighttrue
		\dimen100=#1
   		\edef\@p@sheight{\number\dimen100}
		%\ps@typeout{Height is \@p@sheight}
}
\def\@p@@swidth#1{
		%\ps@typeout{Width is #1}
		\@widthtrue
		\dimen100=#1
		\edef\@p@swidth{\number\dimen100}
}
\def\@p@@srheight#1{
		%\ps@typeout{Reserved height is #1}
		\@rheighttrue
		\dimen100=#1
		\edef\@p@srheight{\number\dimen100}
}
\def\@p@@srwidth#1{
		%\ps@typeout{Reserved width is #1}
		\@rwidthtrue
		\dimen100=#1
		\edef\@p@srwidth{\number\dimen100}
}
\def\@p@@sangle#1{
		%\ps@typeout{Rotation is #1}
		\@angletrue
%		\dimen100=#1
		\edef\@p@sangle{#1} %\number\dimen100}
}
\def\@p@@ssilent#1{ 
		\@verbosefalse
}
\def\@p@@sprolog#1{\@prologfiletrue\def\@prologfileval{#1}}
\def\@p@@spostlog#1{\@postlogfiletrue\def\@postlogfileval{#1}}
\def\@cs@name#1{\csname #1\endcsname}
\def\@setparms#1=#2,{\@cs@name{@p@@s#1}{#2}}
%
% initialize the defaults (size the size of the figure)
%
\def\ps@init@parms{
		\@bbllxfalse \@bbllyfalse
		\@bburxfalse \@bburyfalse
		\@heightfalse \@widthfalse
		\@rheightfalse \@rwidthfalse
		\def\@p@sbbllx{}\def\@p@sbblly{}
		\def\@p@sbburx{}\def\@p@sbbury{}
		\def\@p@sheight{}\def\@p@swidth{}
		\def\@p@srheight{}\def\@p@srwidth{}
		\def\@p@sangle{0}
		\def\@p@sfile{} \def\@p@sbbfile{}
		\def\@p@scost{10}
		\def\@sc{}
		\@prologfilefalse
		\@postlogfilefalse
		\@clipfalse
		\if@noisy
			\@verbosetrue
		\else
			\@verbosefalse
		\fi
}
%
% Go through the options setting things up.
%
\def\parse@ps@parms#1{
	 	\@psdo\@psfiga:=#1\do
		   {\expandafter\@setparms\@psfiga,}}
%
% Compute bb height and width
%
\newif\ifno@bb
\def\bb@missing{
	\if@verbose{
		\ps@typeout{psfig: searching \@p@sbbfile \space  for bounding box}
	}\fi
	\no@bbtrue
	\epsf@getbb{\@p@sbbfile}
        \ifno@bb \else \bb@cull\epsf@llx\epsf@lly\epsf@urx\epsf@ury\fi
}	
\def\bb@cull#1#2#3#4{
	\dimen100=#1 bp\edef\@p@sbbllx{\number\dimen100}
	\dimen100=#2 bp\edef\@p@sbblly{\number\dimen100}
	\dimen100=#3 bp\edef\@p@sbburx{\number\dimen100}
	\dimen100=#4 bp\edef\@p@sbbury{\number\dimen100}
	\no@bbfalse
}
% rotate point (#1,#2) about (0,0).
% The sine and cosine of the angle are already stored in \sine and
% \cosine.  The result is placed in (\p@intvaluex, \p@intvaluey).
\newdimen\p@intvaluex
\newdimen\p@intvaluey
\def\rotate@#1#2{{\dimen0=#1 sp\dimen1=#2 sp
%            	calculate x' = x \cos\theta - y \sin\theta
		  \global\p@intvaluex=\cosine\dimen0
		  \dimen3=\sine\dimen1
		  \global\advance\p@intvaluex by -\dimen3
% 		calculate y' = x \sin\theta + y \cos\theta
		  \global\p@intvaluey=\sine\dimen0
		  \dimen3=\cosine\dimen1
		  \global\advance\p@intvaluey by \dimen3
		  }}
\def\compute@bb{
		\no@bbfalse
		\if@bbllx \else \no@bbtrue \fi
		\if@bblly \else \no@bbtrue \fi
		\if@bburx \else \no@bbtrue \fi
		\if@bbury \else \no@bbtrue \fi
		\ifno@bb \bb@missing \fi
		\ifno@bb \ps@typeout{FATAL ERROR: no bb supplied or found}
			\no-bb-error
		\fi
		%
%\ps@typeout{BB: \@p@sbbllx, \@p@sbblly, \@p@sbburx, \@p@sbbury} 
%
% store height/width of original (unrotated) bounding box
		\count203=\@p@sbburx
		\count204=\@p@sbbury
		\advance\count203 by -\@p@sbbllx
		\advance\count204 by -\@p@sbblly
		\edef\ps@bbw{\number\count203}
		\edef\ps@bbh{\number\count204}
		%\ps@typeout{ psbbh = \ps@bbh, psbbw = \ps@bbw }
		\if@angle 
			\Sine{\@p@sangle}\Cosine{\@p@sangle}
	        	{\dimen100=\maxdimen\xdef\r@p@sbbllx{\number\dimen100}
					    \xdef\r@p@sbblly{\number\dimen100}
			                    \xdef\r@p@sbburx{-\number\dimen100}
					    \xdef\r@p@sbbury{-\number\dimen100}}
%
% Need to rotate all four points and take the X-Y extremes of the new
% points as the new bounding box.
                        \def\minmaxtest{
			   \ifnum\number\p@intvaluex<\r@p@sbbllx
			      \xdef\r@p@sbbllx{\number\p@intvaluex}\fi
			   \ifnum\number\p@intvaluex>\r@p@sbburx
			      \xdef\r@p@sbburx{\number\p@intvaluex}\fi
			   \ifnum\number\p@intvaluey<\r@p@sbblly
			      \xdef\r@p@sbblly{\number\p@intvaluey}\fi
			   \ifnum\number\p@intvaluey>\r@p@sbbury
			      \xdef\r@p@sbbury{\number\p@intvaluey}\fi
			   }
%			lower left
			\rotate@{\@p@sbbllx}{\@p@sbblly}
			\minmaxtest
%			upper left
			\rotate@{\@p@sbbllx}{\@p@sbbury}
			\minmaxtest
%			lower right
			\rotate@{\@p@sbburx}{\@p@sbblly}
			\minmaxtest
%			upper right
			\rotate@{\@p@sbburx}{\@p@sbbury}
			\minmaxtest
			\edef\@p@sbbllx{\r@p@sbbllx}\edef\@p@sbblly{\r@p@sbblly}
			\edef\@p@sbburx{\r@p@sbburx}\edef\@p@sbbury{\r@p@sbbury}
%\ps@typeout{rotated BB: \r@p@sbbllx, \r@p@sbblly, \r@p@sbburx, \r@p@sbbury}
		\fi
		\count203=\@p@sbburx
		\count204=\@p@sbbury
		\advance\count203 by -\@p@sbbllx
		\advance\count204 by -\@p@sbblly
		\edef\@bbw{\number\count203}
		\edef\@bbh{\number\count204}
		%\ps@typeout{ bbh = \@bbh, bbw = \@bbw }
}
%
% \in@hundreds performs #1 * (#2 / #3) correct to the hundreds,
%	then leaves the result in @result
%
\def\in@hundreds#1#2#3{\count240=#2 \count241=#3
		     \count100=\count240	% 100 is first digit #2/#3
		     \divide\count100 by \count241
		     \count101=\count100
		     \multiply\count101 by \count241
		     \advance\count240 by -\count101
		     \multiply\count240 by 10
		     \count101=\count240	%101 is second digit of #2/#3
		     \divide\count101 by \count241
		     \count102=\count101
		     \multiply\count102 by \count241
		     \advance\count240 by -\count102
		     \multiply\count240 by 10
		     \count102=\count240	% 102 is the third digit
		     \divide\count102 by \count241
		     \count200=#1\count205=0
		     \count201=\count200
			\multiply\count201 by \count100
		 	\advance\count205 by \count201
		     \count201=\count200
			\divide\count201 by 10
			\multiply\count201 by \count101
			\advance\count205 by \count201
		     \count201=\count200
			\divide\count201 by 100
			\multiply\count201 by \count102
			\advance\count205 by \count201
		     \edef\@result{\number\count205}
}
\def\compute@wfromh{
		% computing : width = height * (bbw / bbh)
		\in@hundreds{\@p@sheight}{\@bbw}{\@bbh}
		%\ps@typeout{ \@p@sheight * \@bbw / \@bbh, = \@result }
		\edef\@p@swidth{\@result}
		%\ps@typeout{w from h: width is \@p@swidth}
}
\def\compute@hfromw{
		% computing : height = width * (bbh / bbw)
	        \in@hundreds{\@p@swidth}{\@bbh}{\@bbw}
		%\ps@typeout{ \@p@swidth * \@bbh / \@bbw = \@result }
		\edef\@p@sheight{\@result}
		%\ps@typeout{h from w : height is \@p@sheight}
}
\def\compute@handw{
		\if@height 
			\if@width
			\else
				\compute@wfromh
			\fi
		\else 
			\if@width
				\compute@hfromw
			\else
				\edef\@p@sheight{\@bbh}
				\edef\@p@swidth{\@bbw}
			\fi
		\fi
}
\def\compute@resv{
		\if@rheight \else \edef\@p@srheight{\@p@sheight} \fi
		\if@rwidth \else \edef\@p@srwidth{\@p@swidth} \fi
		%\ps@typeout{rheight = \@p@srheight, rwidth = \@p@srwidth}
}
%		
% Compute any missing values
\def\compute@sizes{
	\compute@bb
	\if@scalefirst\if@angle
% at this point the bounding box has been adjsuted correctly for
% rotation.  PSFIG does all of its scaling using \@bbh and \@bbw.  If
% a width= or height= was specified along with \psscalefirst, then the
% width=/height= value needs to be adjusted to match the new (rotated)
% bounding box size (specifed in \@bbw and \@bbh).
%    \ps@bbw       width=
%    -------  =  ---------- 
%    \@bbw       new width=
% so `new width=' = (width= * \@bbw) / \ps@bbw; where \ps@bbw is the
% width of the original (unrotated) bounding box.
	\if@width
	   \in@hundreds{\@p@swidth}{\@bbw}{\ps@bbw}
	   \edef\@p@swidth{\@result}
	\fi
	\if@height
	   \in@hundreds{\@p@sheight}{\@bbh}{\ps@bbh}
	   \edef\@p@sheight{\@result}
	\fi
	\fi\fi
	\compute@handw
	\compute@resv}

%
% \psfig
% usage : \psfig{file=, height=, width=, bbllx=, bblly=, bburx=, bbury=,
%			rheight=, rwidth=, clip=}
%
% "clip=" is a switch and takes no value, but the `=' must be present.
\def\psfig#1{\vbox {
	% do a zero width hard space so that a single
	% \psfig in a centering enviornment will behave nicely
	%{\setbox0=\hbox{\ }\ \hskip-\wd0}
	%
	\ps@init@parms
	\parse@ps@parms{#1}
	\compute@sizes
	\ifnum\@p@scost<\@psdraft{
		\special{ps::[begin] 	\@p@swidth \space \@p@sheight \space
				\@p@sbbllx \space \@p@sbblly \space
				\@p@sbburx \space \@p@sbbury \space
				startTexFig \space }
		\if@angle
			\special {ps:: \@p@sangle \space rotate \space} 
		\fi
		\if@clip{
			\if@verbose{
				\ps@typeout{(clip)}
			}\fi
			\special{ps:: doclip \space }
		}\fi
		\if@prologfile
		    \special{ps: plotfile \@prologfileval \space } \fi
		\if@decmpr{
			\if@verbose{
				\ps@typeout{psfig: including \@p@sfile.Z \space }
			}\fi
			\special{ps: plotfile "`zcat \@p@sfile.Z" \space }
		}\else{
			\if@verbose{
				\ps@typeout{psfig: including \@p@sfile \space }
			}\fi
			\special{ps: plotfile \@p@sfile \space }
		}\fi
		\if@postlogfile
		    \special{ps: plotfile \@postlogfileval \space } \fi
		\special{ps::[end] endTexFig \space }
		% Create the vbox to reserve the space for the figure.
		\vbox to \@p@srheight sp{
		% 1/92 TJD Changed from "true sp" to "sp" for magnification.
			\hbox to \@p@srwidth sp{
				\hss
			}
		\vss
		}
	}\else{
		% draft figure, just reserve the space and print the
		% path name.
		\if@draftbox{		
			% Verbose draft: print file name in box
			\hbox{\frame{\vbox to \@p@srheight sp{
			\vss
			\hbox to \@p@srwidth sp{ \hss \@p@sfile \hss }
			\vss
			}}}
		}\else{
			% Non-verbose draft
			\vbox to \@p@srheight sp{
			\vss
			\hbox to \@p@srwidth sp{\hss}
			\vss
			}
		}\fi

	}\fi
}}
\psfigRestoreAt
\let\@=\LaTeXAtSign
               % for including ps file
\input{epsf}                % for including eps file
\renewcommand{\baselinestretch}{1.5}
\topmargin -15pt \textwidth 38pc \textheight 55pc \evensidemargin
10pt \oddsidemargin 10pt

%\begin{itemize}
%\item[(i)] $f$ is injective.
%\item[(ii)] $f$ is surjective.
%\end{itemize}

%\smallskip \noindent {\bf Proof:}
%{\em Let the sides be denoted $a$ and $b$.
%Then ... \hfill $\Box$}

\newcommand{\bthm} {\begin{thm} }
\newcommand{\ethm} {\end{thm}}
\newcommand{\blem} {\begin{lem} }
\newcommand{\elem} {\end{lem}}
\newcommand{\bcor} {\begin{cor} }
\newcommand{\ecor} {\end{cor}}

\newcommand{\beq}  {\begin{equation}}
\newcommand{\eeq}  {\end{equation}}
\newcommand{\beqa}{\begin{eqnarray}}
\newcommand{\eeqa} {\end{eqnarray}}
\newcommand{\bdis} {\begin{displaymath}}
\newcommand{\edis} {\end{displaymath}}
\newcommand{\bitem}{\begin{itemize}}
\newcommand{\eitem}{\end{itemize}}
\newcommand{\ba}  {\begin{array}}
\newcommand{\ea}  {\end{array}}

\newcommand{\nr}  {{n_R}}
\newcommand{\nt}  {{n_T}}
\newcommand{\ns}  {{n_s}}

\newcommand{\non}      {\nonumber}
\newcommand{\bfPhi}    {{\bf \Phi}}
\newcommand{\bfDel}    {{\bf \Delta}}
\newcommand{\bfzero}   {{\bf 0}}
\newcommand{\sgm}      {\sigma^2} %sigma square
\newcommand{\bfphi}    {\mbox{\boldmath $\phi$}}
\newcommand{\bfmu}     {\mbox{\boldmath $\mu$}}
\newcommand{\bflambda} {\mbox{\boldmath $\lambda$}}
\newcommand{\mydot}    {\mbox{ $\bullet \ $}}
\newcommand{\opt}      {{\rm opt}}

\newcommand{\tr}     {{\rm Tr}}
\newcommand{\calN}   {{\cal N}}
\newcommand{\calNc}  {{\cal N}_C}
\newcommand{\calH}   {{\cal H}}
\newcommand{\bfcalH} {\mbox{\boldmath $\cal H$}}
\newcommand{\calX}   {{\cal X}}
\newcommand{\bfcalX} {\mbox{\boldmath $\cal X$}}
\newcommand{\bfsmallcalX} {\mbox{\footnotesize \boldmath $\cal X$}}
\newcommand{\calI}   {{\cal I}}
\newcommand{\calA}   {{\cal A}}
\newcommand{\calCN}  {{\cal CN}}
\newcommand{\calR}   {{\cal R}}
\newcommand{\calK}   {{\cal K}}
\newcommand{\calP}   {{\cal P}}

\newcommand{\bfx} {{\bf x}}
\newcommand{\bfR} {{\bf R}}
\newcommand{\bfn} {{\bf n}}
\newcommand{\bfH} {{\bf H}}
\newcommand{\bfh} {{\bf h}}
\newcommand{\bfI} {{\bf I}}
\newcommand{\bfX} {{\bf X}}
\newcommand{\bfK} {{\bf K}}
\newcommand{\hatbfx} {{\hat \bf X}}
\newcommand{\bfA} {{\bf A}}
\newcommand{\bfB} {{\bf B}}
\newcommand{\bfc} {{\bf c}}
\newcommand{\bfr} {{\bf r}}
\newcommand{\bfd} {{\bf d}}
\newcommand{\bfs} {{\bf s}}
\newcommand{\bfS} {{\bf S}}
\newcommand{\bfU} {{\bf U}}
\newcommand{\bfV} {{\bf V}}
\newcommand{\bfQ} {{\bf Q}}
\newcommand{\bfD} {{\bf D}}
\newcommand{\bfW} {{\bf W}}
\newcommand{\bfw} {{\bf w}}
\newcommand{\bfG} {{\bf G}}
\newcommand{\bfg} {{\bf g}}
\newcommand{\bfu} {{\bf u}}
\newcommand{\bfC} {{\bf C}}
\newcommand{\bfv} {{\bf v}}
\newcommand{\bfy} {{\bf y}}
\newcommand{\bfY} {{\bf Y}}
\newcommand{\bfz} {{\bf z}}
\newcommand{\bfZ} {{\bf Z}}
\newcommand{\bfm} {{\bf m}}
\newcommand{\bfT} {{\bf T}}
\newcommand{\bff} {{\bf f}}
\newcommand{\bfb} {{\bf b}}
\newcommand{\bfa} {{\bf a}}
\newcommand{\bfF} {{\bf F}}
\newcommand{\bfM} {{\bf M}}
\newcommand{\bfN} {{\bf N}}
\newcommand{\bfe} {{\bf e}}
\newcommand{\bfE} {{\bf E}}
\newcommand{\bfO} {{\bf O}}
\newcommand{\bfp} {{\bf p}}
\newcommand{\bfq} {{\bf q}}

\newcommand{\mod} {\mathrm{mod}}
\newcommand{\ber} {{\mathrm{BER}}}
\newcommand{\pmrc} {{\mathrm{P-MRC}}}
\newcommand{\pmrcmu} {{\mathrm{P-MRC,SEL}}}
\newcommand{\pmrcsel} {{\mathrm{P-MRC,SEL}}}
\newcommand{\cmrcmu} {{\mathrm{C-MRC,SEL}}}
\newcommand{\cmrcsel} {{\mathrm{C-MRC,SEL}}}
\newcommand{\tinycmrc} {{\tiny{{C}}}}
\newcommand{\tinypmrc} {{\tiny{{P}}}}
\newcommand{\afsel} {{\mathrm{AF,SEL}}}
\newcommand{\eone} {{\mathrm{E_1}}}
\newcommand{\E}  {{\mathbb{E}}}
\newcommand{\rmS}  {{\mathrm{S}}}
\newcommand{\rmR}  {{\mathrm{R}}}
\newcommand{\DTOS}  {{\mathrm{DTOS}}}
\newcommand{\TDBC}  {{\mathrm{TDBC}}}
\newcommand{\ANC}  {{\mathrm{ANC}}}
\newcommand{\outage}  {{\mathrm{outage}}}
\newcommand{\pr}  {{\mathrm{Pr}}}
\newcommand{\maxmin}  {{\mathrm{max\ min}}}

\newcommand{\bargamma} {{\cal{E}}}
\newcommand{\bmalpha} {\mbox{\boldmath $\alpha$}}
\newcommand{\bmPhi} {\mbox{\boldmath $\Phi$}}
\newcommand{\bmSigma} {\mbox{\boldmath $\Sigma$}}
\newcommand{\bmmu} {\mbox{\boldmath $\mu$}}
\newcommand{\bmOmega} {\mbox{\boldmath $\Omega$}}
\newcommand{\bmLambda} {\mbox{\boldmath $\Lambda$}}
\newcommand{\bmlambda} {\mbox{\boldmath $\lambda$}}
\newcommand{\bmPsi} {\mbox{\boldmath $\Psi$}}
\newcommand{\bmpsi} {\mbox{\boldmath $\psi$}}
\newcommand{\bmTheta} {\mbox{\boldmath $\Theta$}}

\newcommand{\bmA} {\mbox{\boldmath $A$}}
\newcommand{\bmB} {\mbox{\boldmath $B$}}
\newcommand{\bmC} {\mbox{\boldmath $C$}}
\newcommand{\bmCs} {{\small\mbox{\boldmath $C$}}}
\newcommand{\bmD} {\mbox{\boldmath $D$}}
\newcommand{\bmE} {\mbox{\boldmath $E$}}
\newcommand{\bmG} {\mbox{\boldmath $G$}}
\newcommand{\bmR} {\mbox{\boldmath $R$}}
\newcommand{\bmDs} {{\small\mbox{\boldmath $D$}}}
\newcommand{\bmS} {\mbox{\boldmath $S$}}
\newcommand{\bmV} {\mbox{\boldmath $V$}}
\newcommand{\bmW} {\mbox{\boldmath $W$}}
\newcommand{\bmU} {\mbox{\boldmath $U$}}
\newcommand{\bmI} {\mbox{\boldmath $I$}}
\newcommand{\bmX} {\mbox{\boldmath $X$}}

\newcommand{\bma} {\mbox{\boldmath $a$}}
\newcommand{\bmb} {\mbox{\boldmath $b$}}
\newcommand{\bmbs} {{\small\mbox{\boldmath $b$}}}
\newcommand{\bmr} {\mbox{\boldmath $r$}}
\newcommand{\bmw} {\mbox{\boldmath $w$}}
\newcommand{\bmws} {{\small\mbox{\boldmath $w$}}}
\newcommand{\bmh} {\mbox{\boldmath $h$}}
\newcommand{\bmhs} {{\small\mbox{\boldmath $h$}}}
\newcommand{\bmhspt} {{\scriptsize\mbox{\boldmath $h$}}}
\newcommand{\bmn} {\mbox{\boldmath $n$}}
\newcommand{\bmns} {{\small\mbox{\boldmath $n$}}}
\newcommand{\bmnspt} {{\scriptsize\mbox{\boldmath $n$}}}
\newcommand{\bms} {\mbox{\boldmath $s$}}
\newcommand{\bmsspt} {{\scriptsize\mbox{\boldmath $s$}}}
\newcommand{\bmx} {\mbox{\boldmath $x$}}
\newcommand{\bmy} {\mbox{\boldmath $y$}}
\newcommand{\bmu} {\mbox{\boldmath $u$}}

\newcommand{\bmzero} {\mbox{\boldmath $0$}}

\newtheorem{theorem}{Theorem}
\newtheorem{cor}{Corollary}
\newtheorem{lemma}{Lemma}

\renewcommand{\QED}{\QEDopen}

\begin{document}

\begin{titlepage}

\title{Finite-SNR Diversity-Multiplexing Tradeoff and Optimum Power
Allocation in Bidirectional Cooperative Networks}

\author{\vspace{1cm}Zhihang~Yi and Il-Min~Kim, {\it Senior Member, IEEE}\\
\vspace{5mm}
Department of Electrical and Computer Engineering\\
Queen's University\\
Kingston, Ontario, K7L 3N6\\
Canada\\
\vspace{5mm} Email: ilmin.kim@queensu.ca}

\maketitle

\vspace{15mm}

\end{titlepage}
\begin{abstract}

This paper focuses on analog network coding (ANC) and time
division broadcasting (TDBC) which are two major protocols used in
bidirectional cooperative networks. Lower bounds of the outage
probabilities of those two protocols are derived first. Those
lower bounds are extremely tight in the whole signal-to-noise
ratio (SNR) range irrespective of the values of channel variances.
Based on those lower bounds, finite-SNR diversity-multiplexing
tradeoffs of the ANC and TDBC protocols are obtained. Secondly, we
investigate how to efficiently use channel state information (CSI)
in those two protocols. Specifically, an optimum power allocation
scheme is proposed for the ANC protocol. It simultaneously
minimizes the outage probability and maximizes the total mutual
information of this protocol. For the TDBC protocol, an optimum
method to combine the received signals at the relay terminal is
developed under an equal power allocation assumption. This method
minimizes the outage probability and maximizes the total mutual
information of the TDBC protocol at the same time.

\end{abstract}

\section{Introduction}\label{sec:intro}

In traditional unidirectional cooperative networks, several relays
assist in the communication between one source and one destination
in order to achieve spatial diversity \cite{laneman1}. The {\it
diversity-multiplexing tradeoff}, which is one of the most
fundamental properties of any communication systems, of such
networks has been extensively studied
\cite{azarian}--\cite{yuksel}. It has been shown that, in the
unidirectional cooperative networks, the half-duplex constraint of
every terminal induces a severe loss of bandwidth efficiency as
demonstrated by a pre-log factor $1/2$ in the mutual information
expression. In order to overcome this difficulty, bidirectional
cooperative networks were studied in \cite{rankov}, where two
sources exchanged information with the help of several relays. As
a result, there were two traffic flows in a bidirectional
cooperative network and they were supported by the same physical
channels concurrently. Although each traffic flow still had the
pre-log factor $1/2$ in its mutual information expression, the
total mutual information of the network, which was the summation
of the mutual information of both traffic flows, no longer
suffered from the pre-log factor $1/2$. Therefore, bidirectional
cooperative networks had much higher bandwidth efficiency than
unidirectional cooperative networks.

Recently, many novel protocols were studied in the context of
bidirectional cooperative networks, such as {\it physical layer
network coding} (PNC) \cite{shengli1}--\cite{rankov2}, {\it analog
network coding} (ANC) \cite{katti2}, \cite{shengli2}, and {\it
time division broadcast} (TDBC) \cite{shengli1} and
\cite{kim1}--\cite{liu}\footnote{Note that the TDBC protocol was
called the {\it straightforward network coding} scheme in
\cite{shengli1}.}. Those protocols not only achieved high
bandwidth efficiency but also successfully controlled the
interferences between the two traffic flows in a bidirectional
cooperative network. Many previous works assumed that there was no
direct channel between the two sources \cite{shengli1},
\cite{katti2}, \cite{shengli2}, and \cite{liu}--\cite{haoy}. Under
this assumption, the authors showed that the PNC and ANC protocols
had larger capacity regions and higher sum-rates than the TDBC
protocol \cite{shengli1}, \cite{katti2}, and \cite{haoy}. On the
other hand, since there is no direct channel, the maximum
diversity gains of the PNC, ANC, and TDBC protocols are the same,
just one. Therefore, one may conclude that the PNC and ANC
protocols always outperform the TDBC protocol in terms of
diversity-multiplexing tradeoff \cite{zheng}.

In fact, numerous previous publications on cooperative networks
have also considered the case that there is a direct channel
between the two sources \cite{laneman1}--\cite{laneman2},
\cite{yang}, \cite{yuksel}, \cite{kim1}, \cite{kim2}, and
\cite{anghel1}--\cite{katti}. For this case, the comparison of
diversity-multiplexing tradeoff might have different result. It is
not hard to see that the TDBC protocol can utilize this direct
channel \cite{kim1}, \cite{kim2}; but the PNC and ANC protocols
can not do so due to the half-duplex constraint \cite{katti}.
Thus, the TDBC protocol might have a higher diversity gain than
the PNC and ANC protocols. This is fundamentally different from
the case where the direct channel does not exist. Therefore, it is
necessary and interesting to compare those protocols in terms of
diversity-multiplexing tradeoff under the assumption that the
direct channel exists. Furthermore, it is more desirable to
compare the tradeoff at finite signal-to-noise ratio (SNR) range
as in \cite{nara} instead of only at infinite SNR in \cite{zheng},
because practical communication systems, such as wireless local
area networks, usually work in the SNR range $3$--$20$ dB.
However, such comparison has not been investigated in previous
publications. This has motivated our work.

In this paper, we assume that the relays work in the
amplify-and-forward mode, and hence, we do not consider the PNC
protocol. In fact, it has been shown that the PNC protocol may
suffer considerable performance loss in fading channels
\cite{haoy}, \cite{dingz}. Thus, we focus on the ANC and TDBC
protocols in this paper. The contributions of this paper are
summarized as follows:

\begin{itemize}
    \item We derive lower bounds of the
    outage probabilities of the ANC and TDBC protocols. Those bounds are extremely tight in the whole SNR range,
    irrespective of the values of channel variances.

    \item We obtain the finite-SNR diversity-multiplexing
    tradeoffs of the ANC and TDBC protocols. Those tradeoffs
    establish a framework which enables us to make a comprehensive comparison
    between those two protocols. For example, the maximum
    diversity gain of the TDBC protocol is twice larger than that of
    the ANC protocol. However, the ANC protocol achieves a higher diversity
    gain when the multiplexing gain is larger than $1/2$, irrespective
    of the value of SNR.

    \item For the ANC protocol, we propose an optimum power allocation scheme which
    simultaneously minimizes the outage probability and maximizes
    the total mutual information of this protocol.

    \item For the TDBC protocol, we develop an optimum method to combine the received signals at the relay when
    equal power allocation is applied. This method simultaneously minimizes the
    outage probability and maximizes the total mutual information as well.
\end{itemize}

The rest of this paper is organized as follows. Section
\ref{sec:sys} describes the system models of the ANC and TDBC
protocols. Section \ref{sec:prob} focuses on the outage
probabilities and finite-SNR diversity-multiplexing tradeoffs of
those two protocols. In Section \ref{sec:power}, we first develop
an optimum power allocation scheme for the ANC protocol. Then we
propose an optimum method for the TDBC protocol to combine the
received signals at the relay. Section \ref{sec:numeric} presents
some numerical results and Section \ref{sec:conl} concludes this
paper.

%%%%%%%%%%%%%%%%%%%%%%%%%%%%%%%%%%%%%%%%%%%%%%%%%%%%%%%%%%%%%%%%%

\section{System Model}\label{sec:sys}

We consider a bidirectional cooperative network with two sources
and one relay, where the sources intend to exchange information
with the help of the relay. Every terminal has only one antenna
and is half-duplex. We use $\rmS_1$, $\rmS_2$, and $\rmR$ to
denote the first source, the second source, and the relay,
respectively. Let $g$ represent the fading coefficient of the
channel between $\rmS_1$ and $\rmS_2$, $h$ the channel between
$\mathrm{S}_1$ and $\mathrm{R}$, and $f$ the channel between
$\rmR$ and $\mathrm{S}_2$. Furthermore, we assume that $g$, $h$,
and $f$ are complex Gaussian random variables with zero mean and
variances $\Omega_g$, $\Omega_h$, and $\Omega_f$, respectively.
The additive noise associated with every channel is assumed to be
a complex Gaussian random variable with zero mean and unit
variance. Let $x_1$ and $x_2$ denote the information-bearing
symbol transmitted from $\rmS_1$ and $\rmS_2$, respectively. Both
$x_1$ and $x_2$ have unit power. The total transmission power of
the bidirectional cooperative network is constrained to be $3E$.

%The transmission powers at $\rmS_1$, $\rmS_2$, and $\rmR$ are
%denoted by $E_1$, $E_2$, and $E_r$, respectively.

Since the two sources intend to exchange information, there are
two traffic flows in this bidirectional cooperative network. One
is from $\rmS_1$ via $\rmR$ to $\rmS_2$ and the other is from
$\rmS_2$ via $\rmR$ to $\rmS_1$. Each traffic flow can be seen as
a traditional unidirectional cooperative network. For example, in
the first traffic flow, $\rmS_1$ is the transmitter, and $\rmR$
and $\rmS_2$ are the receivers. Hence, it is reasonable to assume
that $\rmR$ knows $h$ and $\rmS_2$ knows $h$, $f$, and $g$ as in
the conventional unidirectional cooperative networks
\cite{laneman1}, \cite{anghel1}. Similarly, due to the existence
of the second traffic flow, it is reasonable to assume that $\rmR$
knows $f$ and $\rmS_1$ knows $h$, $f$, and $g$. In all, we assume
that the two sources know $h$, $f$, and $g$, and the relay knows
$h$ and $f$ as in many previous publications \cite{rankov},
\cite{lee1}, \cite{lee2}, and \cite{kuhn}.

\subsection{Analog Network Coding (ANC)}

The ANC protocol has received lots of attention recently
\cite{katti2}, \cite{shengli2}, and it is illustrated in Fig.\
\ref{fig:model}(a). In this protocol, $\rmS_1$ and $\rmS_2$
simultaneously transmit to $\rmR$ at the first time slot with
power $E$. Thus, the received signal $y_R(1)$ of $\rmR$ at the
first time slot is given by
\begin{equation}
y_R(1) = \sqrt{E}hx_1+\sqrt{E}fx_2+n_R(1),
\end{equation}
where $n_R(1)$ is the additive Gaussian noise. At the second time
slot, $\rmR$ amplifies $y_R(1)$ with an amplifying coefficient
$\rho$ and then transmits it to $\rmS_1$ and $\rmS_2$ also with
power $E$.\footnote{The setting that every terminal has the same
transmission power $E$ does not make our analysis of outage
probability lose generality. This is because, in order to obtain a
general analysis of a cooperative network, it is sufficient to
make the average SNR of every channel different as shown in
\cite{ribeiro}. Although we make the transmission power at every
terminal the same, the variances $\Omega_h$, $\Omega_f$, and
$\Omega_g$ of the channels are different in general. As a result,
the average SNR of every channel is different, which makes our
analysis still general. Furthermore, the setting that every
terminal has the same transmission power $E$ does not affect the
diversity-multiplexing tradeoff analysis neither as shown in
\cite{azarian}, \cite{zheng}.} Consequently, the received signal
$y_{S_1}(2)$ of $\rmS_1$ at the second time slot is given by
\begin{eqnarray}
y_{S_1}(2)&=&\sqrt{E}\rho hy_R(1)+n_{S_1}(2)\\
&=&\rho h^2 Ex_1+ \rho hfEx_2 +\rho h\sqrt{E} n_R(1)+n_{S_1}(2),
%y_{S_2}(2)=\sqrt{E}wfy_R(1)+n_{S_2}(2),
\end{eqnarray}
where $n_{S_1}(2)$ is the additive Gaussian noise. In order to
ensure that the transmission power at $\rmR$ is always $E$, the
amplifying coefficient is chosen as
\begin{equation}
\rho = \sqrt{\frac{1}{E|h|^2+E|f|^2+1}}.
\end{equation}

Note that the received signal $y_{S_1}(2)$ contains both $x_1$ and
$x_2$, where only $x_2$ is the desired signal and $x_1$ is
actually an interference to $\rmS_1$. Since $\rmS_1$ perfectly
knows $x_1$, it can completely remove $x_1$ from $y_{S_1}(2)$ and
obtain a new signal $\tilde{y}_{S_1}(2)$ given by
\begin{equation}
\tilde{y}_{S_1}(2) = \rho hfEx_2 +\rho h\sqrt{E}
n_R(1)+n_{S_1}(2).
\end{equation}
Consequently, the instantaneous SNR $\gamma_1^{\ANC}$ at $\rmS_1$
is given by
\begin{equation}\label{eqn:snr1anc}
\gamma_1^{\ANC} = \frac{\rho^2|hf|^2E^2}{E\rho^2|h|^2+1}
%\frac{E^2|hf|^2}{2E|h|^2+2E|f|^2+1}
%\gamma_1^{\ANC} =
%\frac{E_rE_2|hf|^2}{(E_r+E_1)|h|^2+E_2|f|^2+1}.
\approx \frac{E|hf|^2}{2|h|^2+|f|^2},
\end{equation}
where the approximation is by letting $\rho\approx
\sqrt{1/(E|h|^2+E|f|^2)}$. Since this approximation is very tight
in the whole SNR range, it has been used in many previous
publications \cite{anghel1}, \cite{ribeiro}. Similarly, the
instantaneous SNR $\gamma_2^{\ANC}$ at $\rmS_2$ is given by
\begin{equation}\label{eqn:snr2anc}
\gamma_2^{\ANC} = \frac{\rho^2|hf|^2E^2}{E\rho^2|f|^2+1}
%\frac{E^2|hf|^2}{2E|f|^2+E|h|^2+1}
%\gamma_2^{\ANC} =
%\frac{E_rE_1|hf|^2}{(E_r+E_2)|f|^2+E_1|h|^2+1}.
\approx \frac{E|hf|^2}{2|f|^2+|h|^2}.
\end{equation}

\subsection{Time Division Broadcast (TDBC)}

In order to accomplish information exchange between $\rmS_1$ and
$\rmS_2$, the TDBC protocol, as illustrated in Fig.\
\ref{fig:model}(b), was studied in \cite{kim1},
\cite{kim2}.\footnote{In \cite{shengli1} and
\cite{kim1}--\cite{liu}, the TDBC protocol was actually studied
when the relay worked in the decode-and-forward mode. However, it
is very simple to extend it to the case that the relay works in
the amplify-and-forward mode as shown in this paper.} As in the
ANC protocol, we still assume that every terminal has the same
transmission power $E$. At the first time slot, $\rmS_1$ transmits
$x_1$ to $\rmR$ and $\rmS_2$. Thus, the received signals $y_R(1)$
of $\rmR$ and $y_{S_2}(1)$ of $\rmS_2$ at the first time slot are
given by
\begin{eqnarray}
y_R(1)=\sqrt{E}hx_1+n_R(1),&& y_{S_2}(1)=\sqrt{E}gx_1+n_{S_2}(1),
\end{eqnarray}
where $n_R(1)$ and $n_{S_2}(1)$ are the additive Gaussian noises.
At the second time slot, $\rmS_2$ transmits $x_2$ to $\rmR$ and
$\rmS_1$. The received signals $y_R(2)$ of $\rmR$ and $y_{S_1}(2)$
of $\rmS_1$ at the second time slot are given by
\begin{eqnarray}
y_R(2)=\sqrt{E}fx_2+n_R(2),&& y_{S_1}(2)=\sqrt{E}gx_2+n_{S_1}(2),
\end{eqnarray}
where $n_R(2)$ and $n_{S_1}(2)$ are the additive Gaussian noises.
At the third time slot, $\rmR$ combines $y_R(1)$ and $y_R(2)$ at
first. Then it broadcasts the combined signal to $\rmS_1$ and
$\rmS_2$. The combined signal at $\rmR$ is denoted by $x_R$ and it
is given by
\begin{equation}
x_R = \eta_1y_R(1) + \eta_2y_R(2),
\end{equation}
where
\begin{eqnarray}
\eta_1 = \sqrt{\frac{\xi }{E|h|^2+1}},&&\eta_2 =
\sqrt{\frac{(1-\xi) }{E|f|^2+1}}.
\end{eqnarray}
The choice of $\eta_1$ and $\eta_2$ ensures that $x_R$ always has
unit power. The value of $\xi$ decides how the relay combines the
signals received from two different sources and it can be used to
optimize the performance of the TDBC protocol, which will be
discussed in detail later.

At the third time slot, the received signal $y_{S_1}(3)$ of
$\rmS_1$ is given by
\begin{eqnarray}
y_{S_1}(3) &=&
\sqrt{E}hx_R+n_{S_1}(3)\\
&=&\eta_1h^2Ex_1+\eta_2hfEx_2+\eta_1h\sqrt{E}n_R(1)+\eta_2h\sqrt{E}n_R(2)+n_{S_1}(3).
%y_{S_2}(3)&=&\sqrt{E_r}fx_R+n_{S_2}(3).
\end{eqnarray}
As in the ANC protocol, $\rmS_1$ can completely remove $x_1$ and
obtain a new signal $\tilde{y}_{S_1}(3)$ given by
\begin{equation}
\tilde{y}_{S_1}(3)=\eta_2hfEx_2+\eta_1h\sqrt{E}n_R(1)+\eta_2h\sqrt{E}n_R(2)+n_{S_1}(3).
\end{equation}
Then $\rmS_1$ combines $\tilde{y}_{S_1}(3)$ with $y_{S_1}(2)$ by
maximum ratio combining and the instantaneous SNR
$\gamma_1^{\TDBC}$ of the combined signal is
\begin{eqnarray}
\gamma_1^{\TDBC} =
E|g|^2+\frac{E^2\eta_2^2|hf|^2}{E|h|^2(\eta_1^2+\eta_2^2)+1}
\approx E|g|^2+\frac{(1-\xi)E|hf|^2}{|f|^2(\xi
+1)+(1-\xi)|h|^2},\label{eqn:snr1tdbc}
% &\approx&
%\frac{E_2|g|^2}{\sigma_n^2}+\frac{(1-\xi)E_rE_1E_2|hf|^2}{\sigma_n^2\left(E_2|f|^2(\xi
%E_r+E_1)+(1-\xi)E_1E_r|h|^2\right)},\label{eqn:snr1tdbc}
\end{eqnarray}
where the approximation is due to
$\eta_1\approx\sqrt{\xi/(E|h|^2)}$ and
$\eta_2\approx\sqrt{(1-\xi)/(E|f|^2)}$. Similarly, the
instantaneous SNR $\gamma_2^{\TDBC}$ at $\rmS_2$ is
\begin{eqnarray}
\gamma_2^{\TDBC} =
E|g|^2+\frac{E^2\eta_1^2|hf|^2}{E|f|^2(\eta_1^2+\eta_2^2)+1}
&\approx& E|g|^2+\frac{\xi E|hf|^2}{|h|^2(2-\xi)+\xi
|f|^2}.\label{eqn:snr2tdbc}
%&\approx& \frac{E_1|g|^2}{\sigma_n^2}+\frac{\xi
%E_rE_1E_2|hf|^2}{\sigma_n^2\left(E_1|h|^2((1-\xi) E_r+E_2)+\xi
%E_2E_r|f|^2\right)}.\label{eqn:snr2tdbc}
\end{eqnarray}

%%%%%%%%%%%%%%%%%%%%%%%%%%%%%%%%%%%%%%%%%%%%%%%%%%%%%%%%%%%%%%%%%%%%%

\section{Outage Probability and Finite-SNR Diversity-Multiplexing
Tradeoff}\label{sec:prob}

In this section, we derive lower bounds of the outage
probabilities of the ANC and TDBC protocols. They are very tight
in the whole SNR range, irrespective of the values of channel
variances. Furthermore, based on those lower bounds, we derive the
finite-SNR diversity-multiplexing tradeoffs of the ANC and TDBC
protocols.

\subsection{Analog Network Coding}\label{sec:ANCout}

When the ANC protocol is used, it follows from (\ref{eqn:snr1anc})
and (\ref{eqn:snr2anc}) that the mutual information at $\rmS_1$
and $\rmS_2$ is given by
\begin{eqnarray}\label{eqn:ancr1r2}
I_1^\ANC =
\frac{1}{2}\log\left(1+\gamma_1^{\ANC}\right),&&I_2^\ANC =
\frac{1}{2}\log\left(1+\gamma_2^{\ANC}\right).
\end{eqnarray}
Note that the pre-log factor $1/2$ is because the information
exchange between the two sources takes two time slots
\cite{rankov}. Assume that the target rate of the whole
bidirectional cooperative network is $R$. Since the two sources in
this network are equivalent terminals, it is fair to set the
target rate of each source as $R/2$. Furthermore, in a
bidirectional cooperative network, the two sources are not only
transmitters but also receivers. Due to this reason, a
bidirectional cooperative network can be seen as a multiuser
system. It is well known that a multiuser system is in outage when
any user is in outage \cite{tse2}, \cite{tse3}. Therefore, the ANC
protocol is in outage when either $I_1^{\ANC}$ or $I_2^\ANC$ is
smaller than the target rate $R/2$, i.e.
\begin{equation}\label{eqn:ancout}
P_{\ANC}^{\outage}(R) = \Pr\left(I_1^\ANC<\frac{R}{2} \ \
{\mathrm{or}} \ \ I_2^\ANC<\frac{R}{2}\right).
\end{equation}

The exact expression of the outage probability is very hard to
derive even after we approximate the instantaneous SNRs as in
(\ref{eqn:snr1anc}) and (\ref{eqn:snr2anc}). However, it is well
known that the harmonic mean of two positive numbers can be
upper-bounded by the minimum of those two numbers as follows
\cite{anghel1}:
\begin{equation}\label{eqn:hormonic}
\frac{xy}{x+y}<\min(x,y).
\end{equation}
In order to make the analysis feasible, we use this method to
upper-bound the instantaneous SNRs and derive a lower bound of the
outage probability in the following lemma.\footnote{In fact, the
harmonic mean can also be lower-bounded as $1/2\min(x,y)\leq
xy/(x+y)$ \cite{anghel1}. Using this fact and the techniques
developed in Appendix A, we can find an upper bound of the outage
probability as well. However, this upper bound is not as tight as
the lower bound given in Lemma \ref{thm:ancoutage}, and hence, it
is not presented in this paper.}

%\begin{eqnarray}\label{eqn:ancanrapp}
%\gamma_1^\ANC \approx \frac{aE}{2}\min\left(2|h|^2,|f|^2\right),&&
%\gamma_2^\ANC \approx \frac{aE}{2}\min\left(2|f|^2,|h|^2\right),
%\gamma_1^\ANC \approx \frac{aE}{2}\min\left(2|h|^2,|f|^2\right),&&
%\gamma_2^\ANC \approx
%\frac{aE}{2}\min\left(2|f|^2,|h|^2\right),
%\\
%\gamma_1^\ANC &\approx&
%\frac{aE_r}{E_r+E_1}\min\left(\left(E_r+E_1\right)|h|^2,E_2|f|^2\right),\\
%\gamma_2^\ANC &\approx&
%\frac{aE_r}{E_r+E_2}\min\left(\left(E_r+E_2\right)|f|^2,E_1|h|^2\right),
%\end{eqnarray}
%where $a$ is constant and $1/2\leq a<1$. When we let $a=1/2$ and
%$a=1$, we can derive a upper and a lower bound of the outage
%probability, respectively.

%Thus, we approximate $xy/(x+y)$ by $a\min(x,y)$ as in
%(\ref{eqn:ancanrapp}) and we can adjust the value of $a$ to make
%such approximation accurate. In the following theorem, we derive
%an approximate expressions of the outage probability
%$P_{\ANC}^{\outage}(R)$.

\begin{lemma}\label{thm:ancoutage}
The outage probability of the ANC protocol can be lower-bounded as
follows:
\begin{eqnarray}\label{eqn:outANC}
P_{\ANC}^{\outage}(R)>\tilde{P}_{\ANC}^{\outage}(R)= 1- \exp\left(
-\frac{2\left(\Omega_h+\Omega_f\right)\left(2^{R}-1\right)}{E\Omega_h\Omega_f}\right).
\end{eqnarray}
\end{lemma}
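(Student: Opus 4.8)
The plan is to bound each approximate instantaneous SNR in (\ref{eqn:snr1anc}) and (\ref{eqn:snr2anc}) from above by a quantity depending on a single fading gain, so that the complement of the outage event factors into a product of two tail events that are easy to evaluate.

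First, observe that since $I_i^{\ANC}=\frac{1}{2}\log(1+\gamma_i^{\ANC})$, the event $\{I_i^{\ANC}<R/2\}$ coincides with $\{\gamma_i^{\ANC}<2^R-1\}$; hence, by (\ref{eqn:ancout}), $P_{\ANC}^{\outage}(R)=1-\Pr\left(\gamma_1^{\ANC}\ge 2^R-1,\ \gamma_2^{\ANC}\ge 2^R-1\right)$. Next, apply the harmonic-mean inequality (\ref{eqn:hormonic}) with $x=2|h|^2$ and $y=|f|^2$ to the approximate SNR in (\ref{eqn:snr1anc}): $\gamma_1^{\ANC}\approx \frac{E}{2}\cdot\frac{(2|h|^2)|f|^2}{2|h|^2+|f|^2}<\frac{E}{2}\min\left(2|h|^2,|f|^2\right)\le \frac{E|f|^2}{2}$, and, symmetrically, $\gamma_2^{\ANC}<\frac{E|h|^2}{2}$. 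Therefore $\{\gamma_1^{\ANC}\ge 2^R-1\}\subseteq\{|f|^2\ge 2(2^R-1)/E\}$ and $\{\gamma_2^{\ANC}\ge 2^R-1\}\subseteq\{|h|^2\ge 2(2^R-1)/E\}$, so the no-outage event $\{\gamma_1^{\ANC}\ge 2^R-1,\ \gamma_2^{\ANC}\ge 2^R-1\}$ is contained in $\{|h|^2\ge 2(2^R-1)/E\}\cap\{|f|^2\ge 2(2^R-1)/E\}$.

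Finally, because $|h|^2$ and $|f|^2$ are independent and exponentially distributed with means $\Omega_h$ and $\Omega_f$, the probability of this intersection equals $\exp\left(-\frac{2(2^R-1)}{E\Omega_h}\right)\exp\left(-\frac{2(2^R-1)}{E\Omega_f}\right)=\exp\left(-\frac{2(\Omega_h+\Omega_f)(2^R-1)}{E\Omega_h\Omega_f}\right)$. Substituting this into the expression for $P_{\ANC}^{\outage}(R)$ gives $P_{\ANC}^{\outage}(R)\ge 1-\exp\left(-\frac{2(\Omega_h+\Omega_f)(2^R-1)}{E\Omega_h\Omega_f}\right)$, which is exactly (\ref{eqn:outANC}).

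The only delicate point is how the denominator $2|h|^2+|f|^2$ is split before invoking (\ref{eqn:hormonic}): pairing $2|h|^2$ with $|f|^2$ (rather than, say, $|h|^2$ with $|h|^2+|f|^2$) is what makes the dominating quantity equal to $E|f|^2/2$ and produces the factor $2$ in the exponent of (\ref{eqn:outANC}); a less careful grouping still yields a valid lower bound, only a looser one. Everything else is routine, and, as elsewhere in the paper, the argument is carried out on the tight approximations to $\gamma_i^{\ANC}$ rather than their exact forms.
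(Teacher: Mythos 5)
Your proof is correct and arrives at exactly the bound in (\ref{eqn:outANC}), but it takes a mildly different and in fact cleaner route than the paper's Appendix~A. The paper retains the bounds $\gamma_1^{\ANC}<\frac{E}{2}\min(2|h|^2,|f|^2)$ and $\gamma_2^{\ANC}<\frac{E}{2}\min(2|f|^2,|h|^2)$, partitions the no-outage event according to which instantaneous SNR is the smaller one, and evaluates two joint integrals over the disjoint regions $\{|f|^2>t,\,|h|^2>|f|^2\}$ and $\{|h|^2>t,\,|f|^2>|h|^2\}$ with $t=2(2^R-1)/E$, obtaining $\frac{\Omega_h}{\Omega_h+\Omega_f}e^{-c}$ and $\frac{\Omega_f}{\Omega_h+\Omega_f}e^{-c}$, which sum to $e^{-c}$ where $c=t(\Omega_h+\Omega_f)/(\Omega_h\Omega_f)$. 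You relax one step further, $\min(2|h|^2,|f|^2)\le|f|^2$ (equivalently, just drop $|f|^2$ from the denominator of $\gamma_1^{\ANC}$), which decouples the two tail conditions entirely and reduces the computation to a product of two independent exponential tails; no case analysis or joint integration is needed. The two routes give the same number because the union of the paper's two disjoint regions is precisely your rectangle $\{|h|^2>t\}\cap\{|f|^2>t\}$, so your extra relaxation costs nothing in tightness. The only blemish is that your final line yields $\ge$ where the lemma asserts $>$; strictness does hold, since $\gamma_1^{\ANC}<E|f|^2/2$ is strict almost surely and the set where $E\min(|h|^2,|f|^2)/2$ exceeds $2^R-1$ while $\min(\gamma_1^{\ANC},\gamma_2^{\ANC})$ does not has positive probability, but a sentence to that effect is needed to match the statement as written.
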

\begin{proof}
See Appendix A.
\end{proof}

%As we have acknowledged, the choice of $a$ certainly affects the
%accuracy of $\tilde{P}_\ANC^\outage(R)$. There might be some
%analytical or numerical ways to decide the optimum value of $a$.

We notice that, irrespective of the values of channel variances,
the lower bound $\tilde{P}_\ANC^\outage(R)$ is extremely tight in
the whole SNR range as shown in Figs.\
\ref{fig:outagedis}--\ref{fig:outagpower2}. Due to this reason, we
use this lower bound to find the finite-SNR diversity-multiplexing
tradeoff of the ANC protocol as in \cite{nara} and it is given in
the follow theorem.
%\footnote{Note that the
%setting that every terminals has the same transmission power $E$
%does not affect the diversity-multiplexing tradeoff curve as shown
%in \cite{zheng}.}
\begin{theorem}\label{cor:anccurve}
The finite-SNR diversity-multiplexing tradeoff curve $d^\ANC(r,E)$
of the ANC protocol is given by
\begin{eqnarray}\label{eqn:anccurvefinite}
d^\ANC(r,E)=\frac{2(\Omega_h+\Omega_f)\left(1+rE(1+E)^{r-1}-(1+E)^r\right)}
{E\Omega_h\Omega_f\left(1-\exp\left(\frac{2(\Omega_h+\Omega_f)}{E\Omega_h\Omega_f}\left((1+E)^r-1\right)\right)\right)}
,&&0\leq r\leq 1.
\end{eqnarray}
\end{theorem}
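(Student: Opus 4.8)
The plan is to apply the finite-SNR diversity--multiplexing tradeoff definition of \cite{nara} directly to the lower bound $\tilde P_{\ANC}^{\outage}$ of Lemma \ref{thm:ancoutage}. Using that bound in place of the (intractable) exact outage probability is legitimate here precisely because, as noted right after Lemma \ref{thm:ancoutage}, it is essentially indistinguishable from the true outage probability across the whole SNR range and for all channel variances. Recall that the finite-SNR multiplexing gain $r$ is defined through $R=r\log_2(1+E)$, so that $2^{R}-1=(1+E)^{r}-1$, and that the finite-SNR diversity gain is the negative logarithmic slope of the outage probability in $E$ at fixed $r$:
\[
d^{\ANC}(r,E)=-\frac{\partial\ln\tilde P_{\ANC}^{\outage}(r,E)}{\partial\ln E}=-\frac{E}{\tilde P_{\ANC}^{\outage}(r,E)}\,\frac{\partial\tilde P_{\ANC}^{\outage}(r,E)}{\partial E},\qquad 0\le r\le 1 .
\]
The restriction $0\le r\le 1$ reflects that the two traffic flows, each carrying a pre-log factor $1/2$, together give the network a total multiplexing gain of at most one.

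The execution would then be routine. First I would substitute $2^{R}-1=(1+E)^{r}-1$ into the bound of Lemma \ref{thm:ancoutage}, writing $\tilde P_{\ANC}^{\outage}(r,E)=1-\exp(-a\,\psi(E))$ with $a=2(\Omega_h+\Omega_f)/(\Omega_h\Omega_f)$ and $\psi(E)=((1+E)^{r}-1)/E$. Differentiating, the chain rule gives $\partial\tilde P_{\ANC}^{\outage}/\partial E=a\,\psi'(E)\exp(-a\psi(E))$, while the quotient rule gives $\psi'(E)=(rE(1+E)^{r-1}-(1+E)^{r}+1)/E^{2}$. Substituting these into the definition and then clearing the common factor $\exp(-a\psi(E))$ from numerator and denominator yields
\[
d^{\ANC}(r,E)=-\frac{E\,a\,\psi'(E)\exp(-a\psi(E))}{1-\exp(-a\psi(E))}=\frac{E\,a\,\psi'(E)}{1-\exp(a\psi(E))} .
\]
Expanding $E\,a\,\psi'(E)=2(\Omega_h+\Omega_f)(1+rE(1+E)^{r-1}-(1+E)^{r})/(E\Omega_h\Omega_f)$ and $a\psi(E)=2(\Omega_h+\Omega_f)((1+E)^{r}-1)/(E\Omega_h\Omega_f)$ then reproduces exactly (\ref{eqn:anccurvefinite}).

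Since every step is elementary calculus, there is no serious obstacle; the points that need care are (i) stating correctly the finite-SNR DMT definition imported from \cite{nara} and justifying the use of the lower bound as a surrogate for the exact outage probability, and (ii) correctly propagating the $E$-dependence that enters $\tilde P_{\ANC}^{\outage}$ through $R=r\log_2(1+E)$ --- this is what produces the term $rE(1+E)^{r-1}$ --- together with the quotient-rule derivative of $((1+E)^{r}-1)/E$. As a sanity check one may verify that $1+rE(1+E)^{r-1}-(1+E)^{r}<0$ and $1-\exp(a\psi(E))<0$ for $E>0$ and $0<r<1$ (the first being equivalent to $(1-r)(1+E)+r>(1+E)^{1-r}$, which holds because both sides coincide at $1+E=1$ and the left side has the larger derivative for $1+E>1$), so that $d^{\ANC}(r,E)>0$, as any diversity gain must be.
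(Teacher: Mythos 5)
Your proposal is correct and follows exactly the paper's route: apply the finite-SNR DMT definition $d=-\partial\ln\tilde P^{\outage}_{\ANC}/\partial\ln E$ from \cite{nara} (with array gain $a=1$, i.e.\ $R=r\log_2(1+E)$) to the lower bound of Lemma \ref{thm:ancoutage} and differentiate. The paper states this and says the result follows "easily"; you have simply supplied the routine chain/quotient-rule computation (and a welcome positivity sanity check), and your algebra reproduces (\ref{eqn:anccurvefinite}) exactly.
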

\begin{proof}
%The proof is essentially the same as that of Corollary
%\ref{cor:tdbccurve}, except that we use the approximate outage
%probability $\tilde{P}_\ANC^\outage$ of the ANC protocol instead
%of $\tilde{P}_\TDBC^\outage$.
By definition, the finite-SNR diversity-multiplexing tradeoff
curve $d^\ANC(r,E)$ is given by \cite{nara}
\begin{equation}\label{eqn:anctradeoffdef}
d^\ANC(r,E) = -\frac{\partial \ln
\tilde{P}^\outage_\ANC(r\log(1+aE))}{\partial \ln E},
\end{equation}
where $a$ is a constant and called array gain. Since every
terminal has only one antenna, we set $a=1$ according to its
definition in \cite{nara}. By substituting (\ref{eqn:outANC}) into
(\ref{eqn:anctradeoffdef}), one can easily obtain
(\ref{eqn:anccurvefinite}).
\end{proof}

An interesting special case of the trade-off curve $d^\ANC(r,E)$
is when the SNR goes to infinity. This actually corresponds to the
definition of infinite-SNR diversity-multiplexing tradeoff given
in \cite{zheng}. For this case, the tradeoff curve of the ANC
protocol becomes
\begin{equation}\label{eqn:anccurveinfinite}
\lim_{E\rightarrow\infty}d^\ANC(r,E) = 1-r, \ \ \ \ 0\leq r\leq 1.
\end{equation}
Based on (\ref{eqn:anccurveinfinite}), we see that the maximum
multiplexing gain of the ANC protocol is one. This is because,
although each traffic flow in the ANC protocol takes two time
slots to complete the transmission, there are two concurrent
traffic flows and they are supported by the same physical
channels. \footnote{Actually, we conjecture that the highest
multiplexing gain any bidirectional cooperative network can
achieve is also one. It is certainly necessary to find the optimum
diversity-multiplexing tradeoff of a bidirectional cooperative
network as the authors did for the unidirectional cooperative
networks in \cite{azarian}, \cite{yang}. However, this is beyond
the scope of this paper where we focus on deriving and comparing
the tradeoffs of two specific protocols.} Note that, in a
conventional unidirectional cooperative network, there is only one
traffic flow in the network, and hence, the maximum multiplexing
gain of such network is just $1/2$. As a result, the ANC protocol
indeed has much higher bandwidth efficiency than the conventional
unidirectional cooperative networks. However, the maximum
diversity gain of the ANC protocol is just one. This is because
the ANC protocol let the two sources transmit simultaneously at
the first time slot. As a result, the sources can not receive the
signals from the direct channel due to the half-duplex constraint.
In order to achieve higher diversity gain, we can assign one time
slot to each source and let them transmit separately as in the
TDBC protocol discussed in the next subsection. As we will show,
however, this leads to some loss of multiplexing gain.

\subsection{Time Division Broadcasting}

When a bidirectional cooperative network implements the TDBC
protocol, the mutual information at $\rmS_1$ and $\rmS_2$ is given
by
\begin{eqnarray}\label{eqn:tdbci1i2}
I_1^\TDBC =
\frac{1}{3}\log\left(1+\gamma_1^\TDBC\right),&&I_2^\TDBC =
\frac{1}{3}\log\left(1+\gamma_2^\TDBC\right),
\end{eqnarray}
where the instantaneous SNRs are given by (\ref{eqn:snr1tdbc}) and
(\ref{eqn:snr2tdbc}), respectively. The pre-log factor $1/3$ is
because each traffic flow takes three time slots in the TDBC
protocol. As for the ANC protocol, the TDBC protocol is in outage
when either $I_1^\TDBC$ or $I_2^\TDBC$ is smaller than the target
rate $R/2$, i.e.
\begin{equation}\label{eqn:tdbcout}
P_{\TDBC}^{\outage}(R) = \Pr\left(I_1^\TDBC<\frac{R}{2} \ \
{\mathrm{or}} \ \ I_2^\TDBC<\frac{R}{2}\right).
\end{equation}
It is very hard to obtain the exact expression of
$P_{\TDBC}^{\outage}(R)$ even after using the approximations in
(\ref{eqn:snr1tdbc}) and (\ref{eqn:snr2tdbc}). In order to make
the problem tractable, we still use (\ref{eqn:hormonic}) to
upper-bound the instantaneous SNRs.
% in the following way
%\begin{eqnarray}
%\gamma_1^\TDBC\approx E|g|^2+\frac{aE}{3}\min(3|f|^2,|h|^2),&&
%\gamma_2^\TDBC\approx E|g|^2+\frac{aE}{3}\min(3|h|^2,|f|^2),
%\end{eqnarray}
%where $a$ is constant and $1/2\leq a<1$.
Furthermore, we let $\xi=1/2$ to simplify the derivation in this
subsection. This actually means that $\rmR$ uses half of its power
to transmit to $\rmS_1$ and uses the other half to $\rmS_2$. In
Subsection \ref{sec:TDBCpower}, we will consider how $\rmR$ can
optimally allocate its transmission power for $\rmS_1$ and
$\rmS_2$ by choosing an optimum value of $\xi$. A lower bound of
the outage probability of the TDBC protocol is given in the
following lemma.
\begin{lemma}\label{thm:outTDBC}
When $\xi=1/2$, the outage probability $P_\TDBC^\outage(R)$ of the
TDBC protocol can be lower-bounded as follows:
\begin{eqnarray}\label{eqn:outTDBC}
P_\TDBC^\outage(R)>\tilde{P}_\TDBC^\outage(R)&=&
1-\frac{1}{3\Omega_f\Omega_g+3\Omega_h\Omega_g-\Omega_h\Omega_f}\nonumber\\
&&\times\left(
3\Omega_g\left(\Omega_f+\Omega_h\right)\exp\left(-\frac{2^{1.5R}-1}{E\Omega_g}\right)\right.\nonumber\\
&&\left.-\Omega_h\Omega_f\exp\left(-\frac{3(2^{1.5R}-1)(\Omega_h+\Omega_f)}{E\Omega_h\Omega_f}\right)
\right).
\end{eqnarray}
\end{lemma}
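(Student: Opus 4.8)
The plan is to mimic the proof of Lemma~\ref{thm:ancoutage} (Appendix~A), now dragging along the extra direct-link term $E|g|^2$. First, set $\xi=1/2$ in the approximate SNRs: (\ref{eqn:snr1tdbc}) and (\ref{eqn:snr2tdbc}) become $\gamma_1^\TDBC\approx E|g|^2+\frac{E|h|^2|f|^2}{|h|^2+3|f|^2}$ and $\gamma_2^\TDBC\approx E|g|^2+\frac{E|h|^2|f|^2}{3|h|^2+|f|^2}$. From (\ref{eqn:tdbci1i2}) the condition $I_i^\TDBC<R/2$ is $\frac13\log(1+\gamma_i^\TDBC)<R/2$, i.e.\ $\gamma_i^\TDBC<2^{1.5R}-1$, so the outage event (\ref{eqn:tdbcout}) is exactly $\{\gamma_1^\TDBC<2^{1.5R}-1\}\cup\{\gamma_2^\TDBC<2^{1.5R}-1\}$.

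Second, upper-bound the relay-path term of each SNR via (\ref{eqn:hormonic}): writing $\frac{|h|^2|f|^2}{|h|^2+3|f|^2}=\frac13\cdot\frac{|h|^2(3|f|^2)}{|h|^2+3|f|^2}<\frac13\min(|h|^2,3|f|^2)=\min\!\big(\tfrac{|h|^2}{3},|f|^2\big)$, and symmetrically $\frac{|h|^2|f|^2}{3|h|^2+|f|^2}<\min\!\big(|h|^2,\tfrac{|f|^2}{3}\big)$. Hence $\gamma_1^\TDBC<E|g|^2+E\min(|h|^2/3,|f|^2)$ and $\gamma_2^\TDBC<E|g|^2+E\min(|h|^2,|f|^2/3)$; since replacing an SNR by an upper bound can only shrink the outage region, $P_\TDBC^\outage(R)$ is at least the probability of the union of the two bounding events. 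The key step is that these two events share the common summand $E|g|^2$, so the union collapses: with $\gamma_{\mathrm{th}}=2^{1.5R}-1$, $\{E|g|^2+E\min(|h|^2/3,|f|^2)<\gamma_{\mathrm{th}}\}\cup\{E|g|^2+E\min(|h|^2,|f|^2/3)<\gamma_{\mathrm{th}}\}=\{E|g|^2+\tfrac{E}{3}\min(|h|^2,|f|^2)<\gamma_{\mathrm{th}}\}$, because among the four quantities appearing inside the two minima, $|h|^2/3$ and $|f|^2/3$ dominate.

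Third, evaluate this probability. Since $g,h,f$ are zero-mean complex Gaussian, $|g|^2,|h|^2,|f|^2$ are independent exponential with means $\Omega_g,\Omega_h,\Omega_f$; thus $\min(|h|^2,|f|^2)\sim\mathrm{Exp}\!\big(\tfrac{1}{\Omega_h}+\tfrac{1}{\Omega_f}\big)$, so $\tfrac13\min(|h|^2,|f|^2)\sim\mathrm{Exp}\!\big(\tfrac{3(\Omega_h+\Omega_f)}{\Omega_h\Omega_f}\big)$, independent of $|g|^2\sim\mathrm{Exp}(1/\Omega_g)$. The desired quantity is the CDF at $u=(2^{1.5R}-1)/E$ of a sum of two independent exponentials with rates $\alpha=1/\Omega_g$ and $\beta=3(\Omega_h+\Omega_f)/(\Omega_h\Omega_f)$, namely $1-\frac{\beta e^{-\alpha u}-\alpha e^{-\beta u}}{\beta-\alpha}$. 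Substituting $\alpha,\beta,u$ and multiplying numerator and denominator by $\Omega_h\Omega_f\Omega_g$ turns $\beta-\alpha$ into $3\Omega_f\Omega_g+3\Omega_h\Omega_g-\Omega_h\Omega_f$ and reproduces (\ref{eqn:outTDBC}) exactly.

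The arithmetic is routine; the two places that need care are: (i) checking that the union of the two bounding events genuinely reduces to the single $\tfrac13\min(|h|^2,|f|^2)$ event and not to something weaker, since this is precisely what produces the factor $3$ in the exponents of (\ref{eqn:outTDBC}) and makes the bound tight; and (ii) the borderline parameter case $\alpha=\beta$, where the convolution formula degenerates to a $\Gamma(2,\cdot)$ CDF — this is a measure-zero set of channel variances, handled by continuity (and implicitly excluded by the stated closed form).
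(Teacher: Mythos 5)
Your proof is correct and follows essentially the same route as the paper's Appendix~B: bound each relay-path SNR term by a minimum via the harmonic-mean inequality (\ref{eqn:hormonic}), observe that the resulting bounding events reduce to the single event $\{E|g|^2+\tfrac{E}{3}\min(|h|^2,|f|^2)<2^{1.5R}-1\}$, and evaluate a sum of two independent exponentials. The only difference is cosmetic: the paper splits this event into the two disjoint cases $|h|^2<|f|^2$ and $|f|^2<|h|^2$ and integrates each, whereas you invoke the hypoexponential CDF directly; both yield (\ref{eqn:outTDBC}).
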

\begin{proof}
See Appendix B.
\end{proof}

Figs.\ \ref{fig:outagedis}--\ref{fig:outagpower2} demonstrate that
this lower bound $\tilde{P}_\TDBC^\outage(R)$ is extremely tight
in the whole SNR range, irrespective of the values of channel
variances. Therefore, we use this lower bound to find the
finite-SNR diversity-multiplexing tradeoff of the TDBC protocol
and it is given in the follow theorem.
\begin{theorem}\label{cor:tdbccurve}
The finite-SNR diversity-multiplexing tradeoff curve
$d^\TDBC(r,E)$ of the TDBC protocol is given by
\begin{eqnarray}
d^\TDBC(r,E) = \frac{d_1(r,E)}{d_2(r,E)},&& 0\leq r\leq
\frac{2}{3},
\end{eqnarray}
where
\begin{eqnarray}
d_1(r,E) &=&
3(\Omega_h+\Omega_f)(\lambda(3rE-2E-2)+2E+2)\nonumber\\
&&\times\left(\exp\left(-\frac{3(\Omega_h+\Omega_f)}{E\Omega_h\Omega_f}(\lambda-1)\right)
-\exp\left(-\frac{\lambda-1}{E\Omega_g}\right)
\right),\\
d_2(r,E)&=&2E(E+1)\left(3\Omega_g(\Omega_h+\Omega_f)\left( 1-
\exp\left(-\frac{\lambda-1}{E\Omega_g}\right)\right)\right.\nonumber\\
&&\left.-\Omega_h\Omega_f\left(1-
\exp\left(-\frac{3(\Omega_h+\Omega_f)}{E\Omega_h\Omega_f}(\lambda-1)\right)\right)\right),\\
\lambda&=&(1+E)^{\frac{3r}{2}}.
\end{eqnarray}
\end{theorem}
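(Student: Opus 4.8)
The plan is to mirror the proof of Theorem~\ref{cor:anccurve} almost verbatim. By the definition of the finite-SNR diversity-multiplexing tradeoff \cite{nara},
\begin{equation}
d^\TDBC(r,E) = -\frac{\partial \ln \tilde{P}^\outage_\TDBC\bigl(r\log(1+aE)\bigr)}{\partial \ln E},
\end{equation}
and, exactly as in Theorem~\ref{cor:anccurve}, the array gain is $a=1$ because every terminal has a single antenna. The first step is a change of variable: substituting $R=r\log(1+E)$ into the bound of Lemma~\ref{thm:outTDBC} turns $2^{1.5R}-1$ into $(1+E)^{3r/2}-1=\lambda-1$, so the two exponents become $u(E):=-(\lambda-1)/(E\Omega_g)$ and $v(E):=-3(\Omega_h+\Omega_f)(\lambda-1)/(E\Omega_h\Omega_f)$.

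Next I would put the bound in compact form. Writing $A=3\Omega_g(\Omega_h+\Omega_f)$ and $B=\Omega_h\Omega_f$, the normalizing constant in Lemma~\ref{thm:outTDBC} is exactly $C:=A-B=3\Omega_f\Omega_g+3\Omega_h\Omega_g-\Omega_h\Omega_f$, so $\tilde{P}_\TDBC^\outage = 1-C^{-1}(Ae^{u}-Be^{v})$ and hence $C\,\tilde{P}_\TDBC^\outage = A(1-e^{u})-B(1-e^{v})$. Comparing with the statement of the theorem, this is precisely $d_2(r,E)/\bigl(2E(E+1)\bigr)$; therefore $\tilde{P}_\TDBC^\outage = d_2(r,E)/\bigl(2E(E+1)C\bigr)$, which already identifies the denominator of $d^\TDBC(r,E)$. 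Since $\partial/\partial\ln E = E\,\partial/\partial E$, it then remains only to verify $-E\,\partial_E\tilde{P}_\TDBC^\outage = d_1(r,E)/\bigl(2E(E+1)C\bigr)$.

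For the differentiation I would use $\partial_E\lambda=(3r/2)(1+E)^{3r/2-1}$. A short computation gives $\partial_E u = -K/(\Omega_g E^{2})$ and $\partial_E v = -3(\Omega_h+\Omega_f)K/(\Omega_h\Omega_f E^{2})$ with the single common factor
\begin{equation}
K \;=\; \frac{\lambda(3rE-2E-2)+2E+2}{2(1+E)}.
\end{equation}
The key simplification is that the prefactors of $e^{u}$ and of $e^{v}$ in $\partial_E\tilde{P}_\TDBC^\outage$ both collapse to $3(\Omega_h+\Omega_f)$, since $A/\Omega_g=3(\Omega_h+\Omega_f)$ and $3(\Omega_h+\Omega_f)B/(\Omega_h\Omega_f)=3(\Omega_h+\Omega_f)$; hence $\partial_E\tilde{P}_\TDBC^\outage = 3(\Omega_h+\Omega_f)K(e^{u}-e^{v})/(CE^{2})$. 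Multiplying by $-E$, dividing by $\tilde{P}_\TDBC^\outage=d_2/(2E(E+1)C)$, and recognizing $6(\Omega_h+\Omega_f)(1+E)K = 3(\Omega_h+\Omega_f)\bigl(\lambda(3rE-2E-2)+2E+2\bigr)$ as the first factor of $d_1(r,E)$ (with $e^{v}-e^{u}$ the second) yields $d^\TDBC(r,E)=d_1(r,E)/d_2(r,E)$. The range $0\le r\le 2/3$ is inherited: the TDBC pre-log factor is $1/3$ and the per-source target rate is $R/2$, so $r\le 2/3$.

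The proof presents no analytic difficulty; the only obstacle is careful bookkeeping --- verifying $A-B=C$, tracking the common factor $K$ through both exponents, and checking that the two exponential prefactors coincide, which is exactly what makes the derivative factor cleanly into the claimed ratio $d_1/d_2$.
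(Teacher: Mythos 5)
Your proposal is correct and follows exactly the route the paper intends: the paper's own proof is a one-line remark that the argument of Theorem~\ref{cor:anccurve} carries over with $\tilde{P}_\TDBC^\outage$ in place of $\tilde{P}_\ANC^\outage$, and you have simply carried out that differentiation explicitly; I checked the bookkeeping (the identity $A-B=C$, the common factor $K=(\lambda(3rE-2E-2)+2E+2)/(2(1+E))$ in both $\partial_E u$ and $\partial_E v$, and the collapse of both exponential prefactors to $3(\Omega_h+\Omega_f)$) and it all verifies, yielding $d^\TDBC=d_1/d_2$ as claimed.
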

\begin{proof}
The proof is essentially the same as that of Theorem
\ref{cor:anccurve}, except that we use the lower bound
$\tilde{P}_\TDBC^\outage(R)$ of the TDBC protocol instead of
$\tilde{P}_\ANC^\outage(R)$.
%Let the throughput threshold $R=r\log E$. It is easy to show that
%\begin{equation}
%\lim_{E\rightarrow \infty} \frac{\log\tilde{P}_\TDBC^\outage}{\log
%E}= 6r-2.
%\end{equation}
%By the definition of diversity-multiplexing tradeoff, it is
%obvious that $d^\TDBC(r) = 2-6r$ \cite{zheng}.
\end{proof}

When the SNR goes to infinity, the trade-off curve $d^\TDBC(r,E)$
becomes
\begin{eqnarray}\label{eqn:tdbccurveinfinite}
\lim_{E\rightarrow\infty} d^\TDBC(r,E) = 2 - 3r, &&0\leq r\leq
\frac{2}{3}.
\end{eqnarray}
As a result, the maximum multiplexing gain of the TDBC protocol is
$2/3$, which is smaller than that of the ANC protocol. This is
because, although the TDBC protocol supports two traffic flows
concurrently as the ANC protocol, each traffic flow takes three
time slots to complete the transmission as opposed to two time
slots in the ANC protocol. Due to this reason, the bandwidth
efficiency of the TDBC protocol is not as high as that of the ANC
protocol. However, note that the maximum multiplexing gain of the
TDBC protocol is still larger than that of the conventional
unidirectional cooperative networks. On the other hand, the
maximum diversity gain of the TDBC protocol is indeed two as shown
in \cite{kim1}, \cite{kim2}, and it is much larger than that of
the ANC protocol. Moreover, based on (\ref{eqn:anccurveinfinite})
and (\ref{eqn:tdbccurveinfinite}), those two protocols achieve the
same diversity gain at $r=1/2$ when SNR goes to infinity.

Many previous publications on bidirectional cooperative networks
did not consider the direct channel between the two sources
\cite{shengli1}, \cite{katti2}, \cite{shengli2}, and
\cite{liu}--\cite{lee2}. As a result, the maximum diversity gain
of the TDBC protocol is just one which is the same as that of the
ANC protocol. Since the maximum multiplexing gain of the TDBC
protocol is never larger than that of the ANC protocol, the ANC
protocol always outperforms the TDBC protocol in terms of
diversity-multiplexing tradeoff. When the direct channel exists,
however, the ANC protocol no longer outperforms the TDBC protocol
for all the time. As shown in Fig.\ \ref{fig:tradeoff}, those two
protocols achieve the same diversity gain when the multiplexing
gain is approximately $1/2$. When the multiplexing gain becomes
smaller, the TDBC protocol outperforms the ANC protocol as it
achieves a higher diversity gain; while, when the multiplexing
gain becomes bigger, the ANC protocol outperforms. This implies
that the ANC protocol can transmit information more efficiently,
but the TDBC protocol can transmit information more reliably.
Therefore, one may alternatively use those two protocols depending
on the specific task of a bidirectional cooperative network.

Certainly, it is desirable to find when the ANC and TDBC protocols
have the same diversity gain for a fixed $E$ by solving the
equation $d^\ANC(r,E)=d^\TDBC(r,E)$. Let $Q(E)$ denote the
solution to this equation. However, the exact expression of $Q(E)$
can not be given in closed form. In the following corollary, we
present a very accurate approximation to $Q(E)$.

\begin{cor}\label{cor:crosspoint}
When $E$ is fixed, the solution $Q(E)$ to the equation
$d^\ANC(r,E)=d^\TDBC(r,E)$ can be approximated by
\begin{eqnarray}
Q(E)\approx\tilde{Q}(E) =
\frac{1}{2}-\frac{\nu_1-\mu_1}{\nu_2-\mu_2}.
\end{eqnarray}
The coefficients $\mu_1$, $\mu_2$, $\nu_1$, and $\nu_2$ are given
as follows:
\begin{eqnarray}
\mu_1 &=& \frac{2\left( \Omega_h+\Omega_f \right)  \left( \,{\frac
{E}{2\sqrt
{1+E}}}+1-\sqrt {1+E} \right)}{ {E}{\Omega_h}{\Omega_f} \left( 1-c_1 \right)},\label{eqn:mu1}\\
\mu_2 &=&
\frac{2(\Omega_h+\Omega_f)}{E\Omega_h\Omega_f(1-c_1)}\left(\frac{E\ln(1+E)+2E}{2\sqrt{1+E}}-\sqrt{1+E}\ln(1+E)\right.\\
&&\left.+\frac{c_1(\Omega_h+\Omega_f)(E-2\sqrt{1+E}+2)\ln(1+E)}{E\Omega_h\Omega_f(c_1-1)}
\right),\label{eqn:mu2}\\
\nu_1&=&\frac{3(c_3-c_2)(\Omega_h+\Omega_f)\left(4(1+E)^{\frac{1}{4}}-E-4\right)}{E(1+E)^{\frac{1}{4}}
(3\Omega_g(\Omega_h+\Omega_f)(1-c_2)-\Omega_h\Omega_f(1-c_3))},\label{eqn:nu1}
\end{eqnarray}
\begin{eqnarray}
\nu_2 &=& \frac{3(\Omega_h+\Omega_f)}{E(1+E)^{\frac{1}{4}}}\left((c_3-c_2)\left(E-\frac{1}{4}(E+4)\ln(1+E)\right)\right.\nonumber\\
&&\left.+\frac{\ln(1+E)}{2E}\left(\frac{c_2}{\Omega_g}
-\frac{3c_3(\Omega_h+\Omega_f)}{\Omega_h\Omega_f}\right)(1+E)^{\frac{3}{4}}\left(2(1+E)^{\frac{1}{4}}
-\frac{E}{2}-2 \right) \right)\nonumber\\
&&+\frac{9(c_2-c_3)^2(\Omega_h+\Omega_f)^2\left(E+4-4(1+E)^{\frac{1}{4}}\right)(1+E)^{\frac{3}{4}}\ln(1+E)}
{4E^2(1+E)^{\frac{1}{4}}(3\Omega_g(\Omega_h+\Omega_f)(c_2-1)+\Omega_h\Omega_f(1-c_3))},\label{eqn:nu2}
\end{eqnarray}
where $c_1$, $c_2$, $c_3$ are constants and they are given by
\begin{eqnarray}
c_1 &=&\exp\left(\frac { 2\left( \Omega_h+\Omega_f \right) \left(
\sqrt {1+E}-1 \right) }{E\Omega_h\Omega_f}\right)\\
c_2 &=&\exp\left(\frac{1-(1+E)^{\frac{3}{4}}}{E\Omega_g}\right)\\
c_3
&=&\exp\left(\frac{3(\Omega_h+\Omega_f)(1-(1+E)^{\frac{3}{4}})}{E\Omega_g}\right).
\end{eqnarray}
\end{cor}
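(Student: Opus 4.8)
The plan is to exploit the fact that, by (\ref{eqn:anccurveinfinite}) and (\ref{eqn:tdbccurveinfinite}), the curves $d^\ANC(\cdot,E)$ and $d^\TDBC(\cdot,E)$ cross exactly at $r=1/2$ in the limit $E\to\infty$; since both are continuous and decreasing in $r$ on their domains, with $d^\TDBC$ starting above $d^\ANC$ at $r=0$ (diversity $2$ versus $1$) and reaching $0$ at $r=2/3$ while $d^\ANC$ is still positive there, a crossing point $Q(E)$ exists and stays near $1/2$ for finite $E$. It therefore suffices to replace each curve by its first-order Taylor polynomial about $r=1/2$ and intersect the two affine functions. Writing
\[
d^\ANC(r,E)\approx\mu_1+\left(r-\frac{1}{2}\right)\mu_2,\qquad
d^\TDBC(r,E)\approx\nu_1+\left(r-\frac{1}{2}\right)\nu_2,
\]
where $\mu_1,\nu_1$ and $\mu_2,\nu_2$ are the values and the $r$-derivatives of the two curves at $r=1/2$ (after the algebraic simplifications recorded through $c_1,c_2,c_3$), the equation $\mu_1+(r-\frac{1}{2})\mu_2=\nu_1+(r-\frac{1}{2})\nu_2$ gives $(r-\frac{1}{2})(\mu_2-\nu_2)=\nu_1-\mu_1$, that is, $r=\frac{1}{2}-\frac{\nu_1-\mu_1}{\nu_2-\mu_2}$, which is exactly $\tilde{Q}(E)$.

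The rest is the closed-form evaluation of the four coefficients. For the ANC curve I would start from (\ref{eqn:anccurvefinite}): setting $r=1/2$ so that $(1+E)^{r}=\sqrt{1+E}$ and $(1+E)^{r-1}=1/\sqrt{1+E}$, and abbreviating the resulting exponential as $c_1$, gives (\ref{eqn:mu1}). Differentiating (\ref{eqn:anccurvefinite}) in $r$ by the quotient rule — using $\frac{d}{dr}(1+E)^{r}=(1+E)^{r}\ln(1+E)$ and $\frac{d}{dr}\!\left(rE(1+E)^{r-1}\right)=E(1+E)^{r-1}\left(1+r\ln(1+E)\right)$ in the numerator, and the chain rule on the exponential in the denominator — and then evaluating at $r=1/2$ produces (\ref{eqn:mu2}). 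The TDBC curve is treated the same way, starting from the expressions for $d_1(r,E)$ and $d_2(r,E)$ in Theorem \ref{cor:tdbccurve} with $\lambda=(1+E)^{3r/2}$ and $\frac{\partial\lambda}{\partial r}=\frac{3}{2}\ln(1+E)\,\lambda$; at $r=1/2$ one has $\lambda=(1+E)^{3/4}$, the exponentials of $\lambda$ collapse to $c_2$ and $c_3$, and the identity $4(1+E)-(E+4)(1+E)^{3/4}=(1+E)^{3/4}\bigl(4(1+E)^{1/4}-E-4\bigr)$ brings the value into the form (\ref{eqn:nu1}) and the derivative into the form (\ref{eqn:nu2}).

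I expect the derivative of the TDBC curve, (\ref{eqn:nu2}), to be the main obstacle: $d^\TDBC$ is a quotient whose numerator $d_1$ is itself a product of a factor affine in $\lambda$ (and linear in $r$) with a difference of two exponentials in $\lambda$, while the denominator $d_2$ carries two further exponentials in $\lambda$. Applying the product and quotient rules together with the chain rule through $\lambda$ generates a large number of terms, and the real labor is in collecting them — in particular grouping the pieces proportional to $\ln(1+E)$ and the one quadratic in $c_2-c_3$ — into the compact three-line form of (\ref{eqn:nu2}); everything else is routine algebra. Finally, because the construction only yields an approximation to $Q(E)$, I would close by noting that the accuracy of $\tilde{Q}(E)$ is confirmed numerically — over the SNR range of interest both curves are essentially linear in $r$ near $r=1/2$ — rather than by a formal error bound, consistent with the way the tightness of the earlier lower bounds was argued.
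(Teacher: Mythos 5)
Your proposal is correct and follows essentially the same route as the paper: the paper likewise observes (from the tradeoff figure) that both curves are nearly linear and cross near $r=1/2$, replaces each by its first-order Taylor polynomial about $r=1/2$ with coefficients (\ref{eqn:mu1})--(\ref{eqn:nu2}), and solves the resulting linear equation to get $\tilde{Q}(E)=\frac{1}{2}-\frac{\nu_1-\mu_1}{\nu_2-\mu_2}$. Your added existence/monotonicity remarks and the explicit differentiation plan go beyond the paper's very terse argument but do not change the method.
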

\begin{proof}
In Fig.\ \ref{fig:tradeoff}, we notice that both $d^\ANC(r,E)$ and
$d^\TDBC(r,E)$ can be accurately approximated by linear functions.
Furthermore, the solution $Q(E)$ to this equation
$d^\ANC(r,E)=d^\TDBC(r,E)$ is very close to $r=1/2$, irrespective
of the value of $E$. Therefore, we approximate $d^\ANC(r,E)$ and
$d^\TDBC(r,E)$ by Taylor expansion as follows:
\begin{eqnarray}\label{eqn:taylor}
d^\ANC(r,E) \approx \mu_1+\mu_2\left(r-\frac{1}{2}\right),&&
d^\TDBC(r,E) \approx \nu_1+\nu_2\left(r-\frac{1}{2}\right),
\end{eqnarray}
where the coefficients $\mu_1$, $\mu_2$, $\nu_1$, and $\nu_2$ are
given in (\ref{eqn:mu1})--(\ref{eqn:nu2}). By using
(\ref{eqn:taylor}), one can easily obtain $\tilde{Q}(E)$.
\end{proof}

Fig.\ \ref{fig:tradeoff2} demonstrates that $\tilde{Q}(E)$ is a
very tight approximation of $Q(E)$. When the SNR goes to infinity,
we can show that
\begin{equation}
\lim_{E\rightarrow\infty}\tilde{Q}(E) = \frac{1}{2}.
\end{equation}
This coincides with our conclusion based on
(\ref{eqn:anccurveinfinite}) and (\ref{eqn:tdbccurveinfinite}).
Furthermore, it is not hard to analytically show that
$\tilde{Q}(E)$ is always smaller than or equal to $1/2$ for any
$E$, which implies that the ANC protocol achieves a higher
diversity gain than the TDBC protocol as long as the multiplexing
gain is lager than $1/2$.

Note that our outage probability lower bounds and finite-SNR
diversity-multiplexing tradeoffs of the ANC and TDBC protocols are
based on the assumption that there is a direct channel between the
two sources. When such direct channel does not exist as in
\cite{shengli1}, \cite{katti2}, \cite{shengli2}, and
\cite{liu}--\cite{haoy}, however, out results can be easily
extended to this special case by letting $\Omega_g=0$. For the ANC
protocol, the lower bound $\tilde{P}_\ANC^\outage(R)$ and
diversity-multiplexing tradeoff $d^\ANC(r,E)$ do not change. For
the TDBC protocol, the lower bound of the outage probability and
the diversity-multiplexing tradeoff now become
\begin{eqnarray}
\tilde{P}_\TDBC^\outage(R)&=&
1-\exp\left(-\frac{3(2^{1.5R}-1)(\Omega_h+\Omega_f)}{E\Omega_h\Omega_f}
\right),
\end{eqnarray}
\begin{equation}
d^\TDBC(r,E)=\frac{3(\Omega_h+\Omega_f)(\lambda(3rE-2E-2)+2E+2)
\exp\left(-\frac{3(\Omega_h+\Omega_f)}{E\Omega_h\Omega_f}(\lambda-1)\right)
}{2E(E+1)\Omega_h\Omega_f\left(
\exp\left(-\frac{3(\Omega_h+\Omega_f)}{E\Omega_h\Omega_f}(\lambda-1)\right)-1\right)}.
\end{equation}
Furthermore, when the SNR goes to infinity, the trade-off curve of
the TDBC protocol becomes
\begin{eqnarray}\label{eqn:tdbccurveinfinite2}
\lim_{E\rightarrow\infty} d^\TDBC(r,E) = 1 - \frac{3}{2}r, &&0\leq
r\leq \frac{2}{3}.
\end{eqnarray}
By comparing (\ref{eqn:tdbccurveinfinite2}) and
(\ref{eqn:anccurveinfinite}), one can see that the ANC protocol
indeed always outperforms the TDBC protocol in terms of
diversity-multiplexing tradeoff when the direct channel does not
exist. This coincides with the conclusion which can be drawn from
\cite{shengli1}, \cite{katti2}, and \cite{haoy}.

%%%%%%%%%%%%%%%%%%%%%%%%%%%%%%%%%%%%%%%%%%%%%%%%%%%%%%%%%%%%%%%%%%%%%%%%%%%%%%%

\section{Optimum Power Allocation}\label{sec:power}

In this section, we first propose an optimum power allocation
scheme for the ANC protocol. This scheme can simultaneously
minimize the outage probability and maximize the total mutual
information of the ANC protocol. Secondly, we develop an optimum
method for the TDBC protocol to combine the received signals at
the relay. This method also minimizes the outage probability and
maximizes the total mutual information of the TDBC protocol at the
same time.

\subsection{Analog Network Coding}

In Subsection \ref{sec:ANCout}, we derive a lower bound of the
outage probability of the ANC protocol when every terminal has the
same transmission power $E$. Since every terminal knows the values
of $h$ and $f$, it is more desirable to allocate the transmission
power according to channel conditions in order to maximize the
performance. Such power allocation problem has not been
investigated yet in previous publications. We now assume that the
transmission powers of $\rmS_1$, $\rmS_2$, and $\rmR$ are $E_1$,
$E_2$, and $E_r$, respectively. Consequently, the instantaneous
SNRs at $\rmS_1$ and $\rmS_2$ should be rewritten as
\begin{eqnarray}\label{eqn:ancsnr12gen}
\gamma_1^{\ANC}
\approx\frac{E_rE_2|hf|^2}{(E_r+E_1)|h|^2+E_2|f|^2},&&
\gamma_2^{\ANC} \approx
\frac{E_rE_1|hf|^2}{(E_r+E_2)|f|^2+E_1|h|^2}.
\end{eqnarray}

When it comes to optimum power allocation, two optimization goals
are usually considered: minimization of the outage probability and
maximization of the total mutual information.\footnote{For a
single user system whose outage probability is formulated by
$\Pr(I<R)$, it is not hard to see that the optimum power
allocation scheme that minimizes the outage probability also
maximizes the mutual information. For the bidirectional
cooperative network considered in this paper, however, its outage
probability is given in the form $\Pr(I_1<R/2\ {\mathrm{or}}\
I_2<R/2)$ and its total mutual information is $I_1+I_2$.
Therefore, it is not easy to see if there exists an optimum power
allocation scheme which can minimize the outage probability and
maximize the total mutual information at the same time.} We first
use the outage probability as the metric to optimally allocate the
power. That is, the optimization problem is formulated by
\begin{eqnarray}\label{eqn:ancopt1}
(E_1,E_2,E_r) = \arg\min_{E_1,E_2,E_r} P_\ANC^\outage(R),
&&{\mathrm{subject \ to}}\ \ \ E_1+E_2+E_r=3E.
\end{eqnarray}
It follows from (\ref{eqn:ancout}) that the optimization problem
in (\ref{eqn:ancopt1}) is equivalent to the following one
\begin{eqnarray}\label{eqn:ancminmax}
(E_1,E_2,E_r) = \arg\max_{E_1,E_2,E_r}\min
(\gamma_1^\ANC,\gamma_2^\ANC),&&{\mathrm{subject \ to}}\ \ \
E_1+E_2+E_r=3E.
\end{eqnarray}
The minimax problem in (\ref{eqn:ancminmax}) is solved in the
following theorem.
\begin{theorem}\label{thm:ancpower1}
When $E_1+E_2+E_r=3E$, the optimum power allocation that minimizes
the outage probability $P_\ANC^\outage(R)$ of the ANC protocol is
given by
\begin{eqnarray}
E_r &=& \frac{3}{2}E,\label{eqn:ancer}\\
E_1 &=& \frac{3|f|}{2(|h|+|f|)}E,\label{eqn:ance1}\\
E_2 &=& \frac{3|h|}{2(|h|+|f|)}E\label{eqn:ance2}.
\end{eqnarray}
\end{theorem}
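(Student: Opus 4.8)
The plan is to solve the equivalent minimax problem (\ref{eqn:ancminmax}) directly, in two reductions. First I would note that the maximizer must lie in the interior of the feasible simplex $\{E_1+E_2+E_r=3E,\ E_i\ge 0\}$: on its boundary either $E_1$ or $E_2$ vanishes, which by (\ref{eqn:ancsnr12gen}) drives the corresponding $\gamma_i^\ANC$, and hence $\min(\gamma_1^\ANC,\gamma_2^\ANC)$, to $0$. Next, a short computation from (\ref{eqn:ancsnr12gen}) shows that with $E_r$ and $E_1+E_2$ held fixed, $\gamma_1^\ANC$ is strictly decreasing in $E_1$ while $\gamma_2^\ANC$ is strictly increasing in $E_1$. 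Hence at any feasible point with $\gamma_1^\ANC\neq\gamma_2^\ANC$ one can shift an infinitesimal amount of power between $\rmS_1$ and $\rmS_2$, keeping $E_r$ fixed, and strictly increase the smaller SNR; so at the optimum $\gamma_1^\ANC=\gamma_2^\ANC$. This reduces the problem to maximizing the common value of $\gamma_1^\ANC=\gamma_2^\ANC$ over the one-dimensional ``balanced'' curve this equation cuts out of the simplex.

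To traverse the balanced curve I would parametrize it by the power ratio $r=E_1/E_2>0$ and write $a=|h|^2$, $b=|f|^2$. Cross-multiplying $\gamma_1^\ANC=\gamma_2^\ANC$ from (\ref{eqn:ancsnr12gen}) and dividing out $E_2$ gives $E_r(ar-b)+E_2(r-1)(ar+b)=0$; feeding this into $E_1+E_2+E_r=3E$ and invoking the identity $(b-ar)(r+1)+(r-1)(ar+b)=2r(b-a)$, the whole system collapses to $E_r=\frac{3E(r-1)(ar+b)}{2r(b-a)}$ (the degenerate case $|h|=|f|$ is handled separately: the $\rmS_1\leftrightarrow\rmS_2$ symmetry of the problem forces $E_1=E_2$ there, and a one-variable optimization gives $E_r=\frac{3}{2}E$, $E_1=E_2=\frac{3}{4}E$). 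Substituting this expression for $E_r$ back into $\gamma_1^\ANC$ and simplifying, I expect the common SNR along the curve to reduce to $\frac{3abE}{2(b-a)^2}\bigl(a+b-ar-\frac{b}{r}\bigr)$.

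Finally I would maximize over $r$. The bracketed function $a+b-ar-b/r$ is strictly concave on $(0,\infty)$, so its unique stationary point, $-a+b/r^2=0$, i.e.\ $r=\sqrt{b/a}=|f|/|h|$, is its maximum; equivalently $E_1|h|=E_2|f|$. Plugging $r=|f|/|h|$ into the formula for $E_r$, the factor $\sqrt{b}-\sqrt{a}$ cancels top and bottom and $E_r$ collapses to $\frac{3}{2}E$, which is (\ref{eqn:ancer}). Then $E_1+E_2=3E-E_r=\frac{3}{2}E$ together with $E_1:E_2=|f|:|h|$ gives (\ref{eqn:ance1}) and (\ref{eqn:ance2}), and the resulting optimal value is $\gamma_1^\ANC=\gamma_2^\ANC=\frac{3E|hf|^2}{2(|h|+|f|)^2}$.

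I expect the main obstacle to be the algebra in the middle step: getting $\gamma_1^\ANC=\gamma_2^\ANC$ together with the power constraint to collapse into the tidy closed form for $E_r(r)$, and then getting $\gamma_1^\ANC$ restricted to the curve into the simple form above --- without spotting the cancellation identity these expressions stay unwieldy. One should also check the range of $r$ for which the balanced curve stays in the open simplex; it is a bounded arc on which the common SNR tends to $0$ at both endpoints (where either $E_r$ or one of $E_1,E_2$ vanishes) and has exactly one interior critical point, which is what upgrades $r=|f|/|h|$ from a mere stationary point to the global solution of the minimax (\ref{eqn:ancminmax}), and hence to the outage-minimizing allocation via (\ref{eqn:ancout})--(\ref{eqn:ancopt1}).
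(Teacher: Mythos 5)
Your proposal is correct --- I checked the key computations: with $E_r$ and $E_1+E_2$ fixed one indeed gets $\partial\gamma_1^\ANC/\partial E_1<0$ and $\partial\gamma_2^\ANC/\partial E_1>0$ (the derivative numerators reduce to $-|h|^2(E_r+E_1+E_2)$ and $+|f|^2(E_r+E_1+E_2)$ respectively), the equalization curve does collapse to $E_r=\frac{3E(1-r)(|h|^2r+|f|^2)}{2r(|h|^2-|f|^2)}$ with common SNR proportional to $|h|^2+|f|^2-|h|^2r-|f|^2/r$, and $r=|f|/|h|$ gives (\ref{eqn:ancer})--(\ref{eqn:ance2}) with optimal value $\frac{3E|hf|^2}{2(|h|+|f|)^2}$. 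However, your route is genuinely different from the paper's. The paper (Appendix C) invokes the general minimax characterization from \cite{poor}: it forms the weighted objective $M(\pi_0,\alpha,\beta)=\pi_0\gamma_1^\ANC+(1-\pi_0)\gamma_2^\ANC$, shows that for $\pi_0\neq 1/2$ the maximizers put $\alpha=0$ or $\beta=0$ (hence cannot be optimal since the outage probability is then one), and at $\pi_0=1/2$ intersects the one-parameter family of maximizers of $\gamma_1^\ANC+\gamma_2^\ANC$ with the saddle-point condition $\gamma_1^\ANC=\gamma_2^\ANC$. You instead establish $\gamma_1^\ANC=\gamma_2^\ANC$ directly by an elementary exchange/perturbation argument and then carry out the constrained maximization explicitly along the equalization curve. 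The two are dual views of the same saddle point: the paper maximizes the sum and then imposes equality, you impose equality and then maximize the common value. Your version is self-contained (no appeal to Proposition II.C.1 of \cite{poor}), makes the attainment and interiority of the optimum explicit via compactness and the vanishing of the SNRs on the boundary, and yields the optimal SNR value as a by-product; the paper's version is shorter because the textbook machinery absorbs the boundary and existence issues. Two small points to tidy up: the degenerate case $|h|=|f|$ needs no separate symmetry appeal, since the equalization condition itself reduces there to $(E_2-E_1)(E_r+E_1+E_2)=0$ and forces $E_1=E_2$; and when ruling out the boundary you should mention $E_r=0$ alongside $E_1=0$ and $E_2=0$ (you do so implicitly at the end).
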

\begin{proof}
See Appendix C.
\end{proof}

Secondly, we consider the optimum power allocation scheme that
maximizes the total mutual information $I^\ANC=I_1^\ANC+I_2^\ANC$
of the ANC protocol. Therefore, the optimization problem is
formulated as
\begin{eqnarray}\label{eqn:ancoptth1}
(E_1,E_2,E_r) = \arg\max_{E_1,E_2,E_r} I^\ANC, &&{\mathrm{subject
\ to}}\ \ \ E_1+E_2+E_r=3E.
\end{eqnarray}
This optimization problem is solved in the following lemma.
\begin{lemma}
When $E_1+E_2+E_r=3E$, the optimum power allocation scheme that
maximizes the total mutual information $I^\ANC$ of the ANC
protocol is given by (\ref{eqn:ancer})--(\ref{eqn:ance2}).
%given by
%\begin{eqnarray}
%E_r &=& \frac{3}{2}E,\label{eqn:ancer2}\\
%E_1 &=& \frac{3|f|}{2(|h|+|f|)}E,\label{eqn:ance12}\\
%E_2 &=& \frac{3|h|}{2(|h|+|f|)}E\label{eqn:ance22}.
%\end{eqnarray}
\end{lemma}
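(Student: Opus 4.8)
The plan is to convert the maximization of $I^\ANC = I_1^\ANC + I_2^\ANC$ into the maximization of a single product. By (\ref{eqn:ancr1r2}), with $\gamma_1^\ANC,\gamma_2^\ANC$ now given by the general expressions in (\ref{eqn:ancsnr12gen}),
\[
I^\ANC = \frac{1}{2}\log\big((1+\gamma_1^\ANC)(1+\gamma_2^\ANC)\big),
\]
so maximizing $I^\ANC$ under $E_1+E_2+E_r=3E$ is equivalent to maximizing the product $(1+\gamma_1^\ANC)(1+\gamma_2^\ANC)$. The arithmetic--geometric mean inequality applied to the two positive numbers $1+\gamma_1^\ANC$ and $1+\gamma_2^\ANC$ gives $(1+\gamma_1^\ANC)(1+\gamma_2^\ANC)\le\big(1+(\gamma_1^\ANC+\gamma_2^\ANC)/2\big)^2$, with equality iff $\gamma_1^\ANC=\gamma_2^\ANC$. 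Since $t\mapsto(1+t/2)^2$ is increasing on $[0,\infty)$, it is therefore enough to (i) upper bound $\gamma_1^\ANC+\gamma_2^\ANC$ over the feasible set and (ii) check that the bound is attained at an allocation which also equalizes the two instantaneous SNRs.

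The crux is the inequality $\gamma_1^\ANC+\gamma_2^\ANC\le 2\gamma^\ast$ for every feasible $(E_1,E_2,E_r)$, where $\gamma^\ast=\frac{3E|h|^2|f|^2}{2(|h|+|f|)^2}$ is the common value taken by $\gamma_1^\ANC$ and $\gamma_2^\ANC$ at the allocation (\ref{eqn:ancer})--(\ref{eqn:ance2}) (a one-line substitution verifies this; $\gamma^\ast$ is also the max--min value of Theorem \ref{thm:ancpower1}). To establish it I would first use $E_1+E_2+E_r=3E$ to eliminate one variable inside each SNR: because $E_r+E_1=3E-E_2$ and $E_r+E_2=3E-E_1$, the quantity $\gamma_1^\ANC$ depends only on $(E_2,E_r)$ and $\gamma_2^\ANC$ only on $(E_1,E_r)$. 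Fixing $E_r$ and maximizing $\gamma_1^\ANC+\gamma_2^\ANC$ over the segment $E_1+E_2=3E-E_r$, the Lagrange stationarity condition reduces, after clearing the squared denominators and taking square roots, to $|h|E_1+|f|E_2=\frac{3E|h||f|}{|h|+|f|}$, which does not involve $E_r$. A direct computation then shows that at any such interior critical split the sum $\gamma_1^\ANC+\gamma_2^\ANC$ equals exactly $2\gamma^\ast$ for every $E_r$; the remaining configurations --- a boundary split ($E_1=0$ or $E_2=0$), or a value of $E_r$ for which the critical split would require a negative source power --- are one-variable problems whose values never exceed $2\gamma^\ast$. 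Hence $\max(\gamma_1^\ANC+\gamma_2^\ANC)=2\gamma^\ast$, attained at (\ref{eqn:ancer})--(\ref{eqn:ance2}), where in addition $\gamma_1^\ANC=\gamma_2^\ANC$.

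Combining the two steps, $(1+\gamma_1^\ANC)(1+\gamma_2^\ANC)\le(1+\gamma^\ast)^2$, hence $I^\ANC\le\log(1+\gamma^\ast)$; equality forces simultaneously $\gamma_1^\ANC=\gamma_2^\ANC$ (from the AM--GM step) and $\gamma_1^\ANC+\gamma_2^\ANC=2\gamma^\ast$ (from the crux), i.e.\ $\gamma_1^\ANC=\gamma_2^\ANC=\gamma^\ast$. Solving this pair of equations together with the power constraint gives (\ref{eqn:ancer})--(\ref{eqn:ance2}) as the unique solution, so $I^\ANC$ is maximized there --- the same allocation that minimizes the outage probability in Theorem \ref{thm:ancpower1}. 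The main obstacle is the crux inequality: once the problem is reduced to the single parameter $E_r$ it is elementary, but it still demands a careful execution of the stationarity computation, the algebraic identity that the critical split yields sum exactly $2\gamma^\ast$, and the boundary checks --- and I expect that computation to take up most of the corresponding appendix.
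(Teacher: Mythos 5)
Your proposal is correct, but it takes a genuinely different route from the paper. The paper's own proof is a two-line first-order-condition computation: it sets $E_1=3\alpha E$, $E_2=3\beta E$, $E_r=3(1-\alpha-\beta)E$, solves $\partial I^\ANC/\partial\alpha=\partial I^\ANC/\partial\beta=0$, and reads off $\alpha=|f|/(2(|h|+|f|))$, $\beta=|h|/(2(|h|+|f|))$ as the answer, with no explicit check that this stationary point is the global maximum (the boundary $\alpha=0$ or $\beta=0$ gives a finite, nonzero $I^\ANC$, so that check is not entirely vacuous). Your argument instead writes $I^\ANC=\tfrac12\log\bigl((1+\gamma_1^\ANC)(1+\gamma_2^\ANC)\bigr)$, applies AM--GM, and reduces everything to the inequality $\gamma_1^\ANC+\gamma_2^\ANC\le 2\gamma^\ast$ with $\gamma^\ast=3E|h|^2|f|^2/(2(|h|+|f|)^2)$. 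I verified your key claims: for fixed $E_r$ the stationarity condition on the split does collapse to $|h|E_1+|f|E_2=3E|h||f|/(|h|+|f|)$ (independent of $E_r$, and the negative square-root branch is excluded since both denominators are positive); at any such split one gets $\gamma_1^\ANC=|f|^2|h|E_2/(|h|+|f|)$ and $\gamma_2^\ANC=|h|^2|f|E_1/(|h|+|f|)$, whose sum is identically $2\gamma^\ast$; and the boundary problems ($E_1=0$ or $E_2=0$) are single-ratio harmonic-mean maximizations whose optima equal exactly $2\gamma^\ast$ (so the bound is tight there too, but one coordinate vanishes, which is what rules those points out in the equality analysis). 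Your route costs noticeably more work than the paper's, but it buys a genuine global-optimality certificate, the closed-form optimal value $I^\ANC=\log(1+\gamma^\ast)$, and it makes transparent why the mutual-information-optimal and outage-optimal allocations coincide: both are characterized by $\gamma_1^\ANC=\gamma_2^\ANC=\gamma^\ast$.
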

\begin{proof}
Let $E_1=3\alpha E$, $E_2=3\beta E$, and $E_r =
3(1-\alpha-\beta)E$, where $0\leq \alpha\leq 1$, $0\leq \beta\leq
1$, and $\alpha+\beta\leq 1$. With those constraints, the solution
to $\partial I^\ANC/\partial \alpha=0$ and $\partial
I^\ANC/\partial\beta=0$ is given by
\begin{eqnarray}
\alpha = \frac{|f|}{2(|h|+|f|)},&& \beta = \frac{|h|}{2(|h|+|f|)}.
\end{eqnarray}
As a result, the optimum power allocation scheme is given by
(\ref{eqn:ancer})--(\ref{eqn:ance2}).
\end{proof}

Interestingly, the optimum power allocation schemes based on
outage probability and total mutual information are exactly the
same. This is due to the special structures of the instantaneous
SNRs $\gamma_1^\ANC$ and $\gamma_2^\ANC$ given in
(\ref{eqn:ancsnr12gen}). As a result, we obtain an optimum power
allocation scheme that simultaneously minimizes the outage
probability and maximizes the total mutual information of the ANC
protocol.

\subsection{Time Division Broadcasting}\label{sec:TDBCpower}

Unlike the ANC protocol, it is very hard to find an optimum power
allocation scheme for the TDBC protocol no matter using the outage
probability or the total mutual information as the criteria. For
example, if we intend to minimize the outage probability, the
optimization problem is a generalized fractional programming
problem. Such problem can only be solved numerically for most
cases \cite{barros}, \cite{crouzeix}.\footnote{In fact, for the
ANC protocol, the optimization problem (\ref{eqn:ancminmax}) is
also a generalized fractional programming problem. We obtain a
closed form solution to (\ref{eqn:ancminmax}) only because
$\gamma_1^\ANC$ and $\gamma_2^\ANC$ have very special structures.}
However, the optimum value of $\xi$ can be analytically found in
closed form and it can greatly improve the performance. We first
try to find the optimum value of $\xi$ that minimizes the outage
probability of the TDBC protocol. That is, we consider the
following optimization problem
\begin{eqnarray}\label{eqn:tdbcminmax1}
\xi = \arg\min_{\xi}P_\TDBC^\outage(R),&&{\mathrm{subject \ to}} \ \ \ 0\leq\xi\leq 1.%\\
%{\mathrm{subject \ to}}&&% E_1=E_r=E_r=E\ \ {\mathrm{and}}\ \
%0\leq\xi\leq 1.\nonumber
\end{eqnarray}
It follows from (\ref{eqn:tdbcout}) that the optimization problem
in (\ref{eqn:tdbcminmax1}) is equivalent to the following one
\begin{eqnarray}\label{eqn:tdbcminmax2}
\xi = \arg\mathop{\max}_{\xi}\min
(\gamma_1^\TDBC,\gamma_2^\TDBC),&&{\mathrm{subject \ to}} \ \ \ 0\leq\xi\leq 1.%\\
%{\mathrm{subject \ to}}&&% E_1=E_r=E_r=E\ \ {\mathrm{and}}\ \
%0\leq\xi\leq 1.\nonumber
\end{eqnarray}
The solution to (\ref{eqn:tdbcminmax2}) is given in a simple and
closed form in the following theorem.
\begin{theorem}\label{thm:tdbcout}
The optimum value of $\xi$ that minimizes the outage probability
$P_\TDBC^\outage(R)$ of the TDBC protocol is given by
\begin{equation}
\xi = \frac{|h|}{|h|+|f|}\label{eqn:xitdbc1}.
\end{equation}
\end{theorem}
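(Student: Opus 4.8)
The plan is to work with the equivalent maximin formulation
(\ref{eqn:tdbcminmax2}) rather than with the outage probability directly.
First I would note that both instantaneous SNRs in
(\ref{eqn:snr1tdbc})--(\ref{eqn:snr2tdbc}) contain the additive term
$E|g|^2$, which does not depend on $\xi$; hence maximizing
$\min(\gamma_1^\TDBC,\gamma_2^\TDBC)$ over $\xi\in[0,1]$ is the same as
maximizing the minimum of the two $\xi$-dependent fractions
\[
\phi_1(\xi)=\frac{(1-\xi)E|hf|^2}{(\xi+1)|f|^2+(1-\xi)|h|^2},\qquad
\phi_2(\xi)=\frac{\xi E|hf|^2}{(2-\xi)|h|^2+\xi|f|^2}.
\]

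Next I would establish the monotonicity that drives the argument.
Substituting $u=1-\xi$ rewrites $\phi_1$ as
$E|hf|^2\,u/\big(2|f|^2+u(|h|^2-|f|^2)\big)$, and $\phi_2$ can be written
as $E|hf|^2\,\xi/\big(2|h|^2+\xi(|f|^2-|h|^2)\big)$; each therefore has
the form $t\mapsto t/(a+bt)$ with constant term $a>0$, a map strictly
increasing in $t$. Consequently $\phi_1$ is strictly decreasing in $\xi$
on $[0,1]$ while $\phi_2$ is strictly increasing, and moreover
$\phi_1(0)>0=\phi_2(0)$ and $\phi_1(1)=0<\phi_2(1)$. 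Hence $\phi_1-\phi_2$
is strictly decreasing and changes sign once, so there is a unique
interior $\xi^\star$ with $\phi_1(\xi^\star)=\phi_2(\xi^\star)$; to its
left $\min(\phi_1,\phi_2)=\phi_2$ is increasing and to its right
$\min(\phi_1,\phi_2)=\phi_1$ is decreasing, so the maximin is attained
exactly at $\xi^\star$.

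Finally I would solve this equation. Cancelling the common factor
$E|hf|^2$ and cross-multiplying $\phi_1(\xi)=\phi_2(\xi)$ gives, after
collecting terms, $(1-\xi)^2|h|^2=\xi^2|f|^2$; since both $\xi$ and
$1-\xi$ are nonnegative on $[0,1]$, taking square roots yields
$(1-\xi)|h|=\xi|f|$, i.e. $\xi^\star=|h|/(|h|+|f|)$, which is
(\ref{eqn:xitdbc1}).

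The main obstacle I anticipate is not any individual computation but
making the reduction airtight: one must confirm that the quantities
actually being optimized are the right-hand (approximate) expressions in
(\ref{eqn:snr1tdbc})--(\ref{eqn:snr2tdbc}), and one must verify the
monotonicity with care, since the coefficient of $\xi$ in each
denominator changes sign according to whether $|h|>|f|$ or $|h|<|f|$ ---
it is the positivity of the constant terms $2|f|^2$ and $2|h|^2$ that
keeps the $t/(a+bt)$ argument valid in every case. The polynomial algebra
in the last step is routine.
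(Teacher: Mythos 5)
Your proof is correct, and it reaches the paper's answer by a more self-contained route. The paper's Appendix D sets up the weighted objective $W(\pi_0,\xi)=\pi_0\gamma_1^\TDBC(\xi)+(1-\pi_0)\gamma_2^\TDBC(\xi)$ and invokes the minimax machinery of Poor (Proposition II.C.1) to conclude that the optimizer must either equalize the two SNRs or sit at a boundary value of $\pi_0$; it then establishes exactly the same structural facts you use --- $\gamma_1^\TDBC(\xi)$ strictly decreasing, $\gamma_2^\TDBC(\xi)$ strictly increasing, with the symmetry $\gamma_1^\TDBC(0)=\gamma_2^\TDBC(1)$ and $\gamma_1^\TDBC(1)=\gamma_2^\TDBC(0)$ --- to show a unique crossing exists, and reads off $\xi=|h|/(|h|+|f|)$. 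You dispense with the external minimax proposition entirely: the elementary observation that the minimum of a strictly decreasing and a strictly increasing function crossing once on $[0,1]$ is maximized precisely at the crossing point does all the work, and your substitutions $u=1-\xi$ reducing each fraction to the form $t/(a+bt)$ with $a>0$ handle the sign ambiguity of $|h|^2-|f|^2$ cleanly (the paper asserts the monotonicity without this care). Your explicit note that the common term $E|g|^2$ drops out, and the cross-multiplication yielding $(1-\xi)^2|h|^2=\xi^2|f|^2$, are both correct and make the final algebra more transparent than the paper's "such solution is given by $|h|/(|h|+|f|)$." The only thing you take for granted that the paper also takes for granted is the reduction from minimizing $P_\TDBC^\outage(R)$ to the pointwise maximin (\ref{eqn:tdbcminmax2}) and the use of the approximate SNR expressions; both are stated in the body of the paper before the theorem, so this is not a gap.
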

\begin{proof}
See Appendix D.
\end{proof}

Note that the optimum value of $\xi$ given in Theorem
\ref{thm:tdbcout} is based on the assumption that every terminal
has the same transmission power $E$. Certainly, one can jointly
optimize the value of $\xi$ and the transmission power of every
terminal to minimize the outage probability, but this can only be
done numerically. Actually, by letting every terminal has the same
transmission power $E$ and letting $\xi$ equal to $|h|/(|h|+|f|)$
as in (\ref{eqn:xitdbc1}), the performance of the network is
almost the same as that of the network where the transmission
powers and the value of $\xi$ are jointly optimized by numerical
ways as shown in Figs.\ \ref{fig:outagpower1} and
\ref{fig:outagpower2}.

Secondly, we investigate the optimum value of $\xi$ that maximizes
the total mutual information $I^\TDBC=I_1^\TDBC+I_2^\TDBC$ of the
TDBC protocol, i.e.
\begin{eqnarray}\label{eqn:tdbcopt2}
\xi = \arg\max_{\xi} I^\TDBC,&& {\mathrm{subject \ to}} \ \ 0\leq
\xi\leq 1,
\end{eqnarray}
where $I_1^\TDBC$ and $I_2^\TDBC$ are given by
(\ref{eqn:tdbci1i2}). The solution to (\ref{eqn:tdbcopt2}) is
given the following lemma.
\begin{lemma}\label{lem:tdbcpower2}
The optimum value of $\xi$ that maximizes the total mutual
information $I^\TDBC$ of the TDBC protocol is given by
(\ref{eqn:xitdbc1}).
%\begin{equation}
%\xi = \frac{|h|}{|h|+|f|}\label{eqn:xitdbc2}.
%\end{equation}
\end{lemma}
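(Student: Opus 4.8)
The plan is to verify directly that $\xi^{\star}$, the right‑hand side of (\ref{eqn:xitdbc1}), solves the first‑order stationarity condition for $I^{\TDBC}(\xi)$ on $[0,1]$, and then to argue that it is the global maximizer; the argument runs parallel to that for the analogous ANC mutual‑information lemma and to the derivation behind Theorem~\ref{thm:tdbcout}. Since $\log(\cdot)$ is increasing and $I^{\TDBC}=\tfrac13[\log(1+\gamma_1^{\TDBC})+\log(1+\gamma_2^{\TDBC})]$ with $\gamma_1^{\TDBC},\gamma_2^{\TDBC}$ given by (\ref{eqn:snr1tdbc})--(\ref{eqn:snr2tdbc}), maximizing $I^{\TDBC}$ over $\xi$ is equivalent to maximizing $(1+\gamma_1^{\TDBC})(1+\gamma_2^{\TDBC})$, so the stationarity condition is
\[
\frac{1}{1+\gamma_1^{\TDBC}}\frac{\partial\gamma_1^{\TDBC}}{\partial\xi}+\frac{1}{1+\gamma_2^{\TDBC}}\frac{\partial\gamma_2^{\TDBC}}{\partial\xi}=0.
\]
Writing $D_1=|f|^2(\xi+1)+(1-\xi)|h|^2$ and $D_2=|h|^2(2-\xi)+\xi|f|^2$, the term $E|g|^2$ is constant in $\xi$ and the quotient rule collapses cleanly, giving $\partial\gamma_1^{\TDBC}/\partial\xi=-2E|h|^2|f|^4/D_1^2<0$ and $\partial\gamma_2^{\TDBC}/\partial\xi=2E|h|^4|f|^2/D_2^2>0$; in particular $\gamma_1^{\TDBC}$ is strictly decreasing and $\gamma_2^{\TDBC}$ strictly increasing on $[0,1]$.

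Next I would substitute $\xi^{\star}$. A short evaluation gives $D_1(\xi^{\star})=|f|(|h|+|f|)$ and $D_2(\xi^{\star})=|h|(|h|+|f|)$, whence $\gamma_1^{\TDBC}(\xi^{\star})=\gamma_2^{\TDBC}(\xi^{\star})=E|g|^2+E|h|^2|f|^2/(|h|+|f|)^2$. Because the two SNRs (and hence $1+\gamma_1^{\TDBC}$ and $1+\gamma_2^{\TDBC}$) coincide at $\xi^{\star}$, the stationarity condition there reduces to $\partial\gamma_1^{\TDBC}/\partial\xi+\partial\gamma_2^{\TDBC}/\partial\xi=0$, and with the denominators above this is $-2E|h|^2|f|^4/[|f|^2(|h|+|f|)^2]+2E|h|^4|f|^2/[|h|^2(|h|+|f|)^2]=0$, an identity. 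Hence $\xi^{\star}$ is a stationary point. (The case $|h|=|f|$, giving $\xi^{\star}=1/2$, is immediate by symmetry.)

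It then remains to certify that $\xi^{\star}$ is the maximizer on the compact interval $[0,1]$, and this is the main obstacle. Since $D_1,D_2$ are affine in $\xi$, clearing denominators in the stationarity condition yields a polynomial that is only quadratic, so $I^{\TDBC}$ has at most two interior critical points; using $|f|^2D_2^2-|h|^2D_1^2=(|f|D_2-|h|D_1)(|f|D_2+|h|D_1)$ together with the factorization $|f|D_2-|h|D_1=(|h|-|f|)^2\big(\xi(|h|+|f|)-|h|\big)$, one pulls the factor $\xi-\xi^{\star}$ out of this quadratic and checks that the remaining affine factor has no zero in $[0,1]$ — equivalently, that $\partial^2 I^{\TDBC}/\partial\xi^2<0$ at $\xi^{\star}$ — so that $\xi^{\star}$ is the unique interior critical point and a local maximum. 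Combined with the boundary comparison $I^{\TDBC}(0)=I^{\TDBC}(1)$ (at each endpoint one relay‑forwarded SNR drops to $E|g|^2$ while the other equals $E|g|^2+E|h|^2|f|^2/(|h|^2+|f|^2)$) and the strict inequality $I^{\TDBC}(\xi^{\star})>I^{\TDBC}(0)$, which follows from the AM--GM bound $2/(|h|+|f|)^2\ge 1/(|h|^2+|f|^2)$ (so that $(1+E|g|^2+E|h|^2|f|^2/(|h|+|f|)^2)^2$ already exceeds $(1+E|g|^2)(1+E|g|^2+E|h|^2|f|^2/(|h|^2+|f|^2))$), this forces $\xi^{\star}$ to be the global maximizer and proves the lemma. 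A cleaner alternative for this last step is to write $(1+\gamma_1^{\TDBC})(1+\gamma_2^{\TDBC})=C^2+CE(\phi_1+\phi_2)+E^2\phi_1\phi_2$ with $C=1+E|g|^2$ and $\phi_i=(\gamma_i^{\TDBC}-E|g|^2)/E$, and show separately that both $\phi_1+\phi_2$ and $\phi_1\phi_2$ are maximized on $[0,1]$ at $\xi^{\star}$ (each has $\xi^{\star}$ as its unique interior critical point and is strictly smaller at the endpoints); the verification for the product $\phi_1\phi_2$ is the delicate piece of the whole argument.
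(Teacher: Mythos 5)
Your proposal is correct and follows the same basic route as the paper, whose entire proof is the single sentence ``by solving $\partial I^{\TDBC}/\partial\xi=0$ we obtain (\ref{eqn:xitdbc1})''; your stationarity computation (the derivatives $-2E|h|^2|f|^4/D_1^2$ and $2E|h|^4|f|^2/D_2^2$, and $D_1(\xi^\star)=|f|(|h|+|f|)$, $D_2(\xi^\star)=|h|(|h|+|f|)$) checks out. You go beyond the paper by actually certifying global optimality --- and indeed the cleared-denominator stationarity condition factors exactly as $\bigl(C(|h|-|f|)^2(|h|+|f|)^2+E|h|^2|f|^2(|h|^2+|f|^2)\bigr)\bigl(\xi(|h|+|f|)-|h|\bigr)\bigl((|f|-|h|)\xi+|h|\bigr)=0$ with $C=1+E|g|^2$, whose second affine factor equals $|h|$ at $\xi=0$ and $|f|$ at $\xi=1$ and so never vanishes on $[0,1]$ --- which, together with your boundary comparison, supplies the justification the paper omits.
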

\begin{proof}
By solving the equation $\partial I^\TDBC/\partial \xi=0$, we can
easily obtained the solution given in (\ref{eqn:xitdbc1}).
\end{proof}

As for Theorem \ref{thm:tdbcout}, the optimum value of $\xi$ in
Lemma \ref{lem:tdbcpower2} is also based on the setting that every
terminal has the same transmission power $E$. The joint
optimization of the transmission powers and the value of $\xi$ to
maximize the total mutual information can only be accomplished by
numerical ways. In fact, by simply letting every terminal has the
same transmission power $E$ and letting $\xi$ equal to
(\ref{eqn:xitdbc1}), the total mutual information of the TDBC
protocol can be improved substantially as shown in Fig.\
\ref{fig:ratedis}.

Interestingly, we notice that the optimum values of $\xi$ in
Theorem \ref{thm:tdbcout} and Lemma \ref{lem:tdbcpower2} are
exactly the same, although they are based on two different
criteria. As a result, by letting $\xi$ equal to $|h|/(|h|+|f|)$,
we can minimize the outage probability and maximize the total
mutual information of the TDBC protocol at the same time.
Actually, the reason why we can find such $\xi$ is because every
terminal has the same transmission power. In general, when every
terminal has unequal transmission power, the solutions to
(\ref{eqn:tdbcminmax2}) and (\ref{eqn:tdbcopt2}) are different.
Furthermore, the joint optimization of the transmission powers and
the value of $\xi$ based on the outage probability criteria is
also generally different with that based on the total mutual
information criteria. Thus, one may use equal power allocation and
set $\xi$ as (\ref{eqn:xitdbc1}) in order to transmit information
efficiently and reliably at the same time.

%%%%%%%%%%%%%%%%%%%%%%%%%%%%%%%%%%%%%%%%%%%%%%%%%%%%%%%%%%%%%%%%%%%%%%%%%

\section{Numerical Results}\label{sec:numeric}

In this section, we present some numerical results to demonstrate
the performance of the ANC and TDBC protocols. We assume that all
three terminals are located in a straight line and $\rmR$ is
between $\rmS_1$ and $\rmS_2$. We fix the distance between
$\rmS_1$ and $\rmS_2$ as one and let $D_1$ denote the distance
between $\rmS_1$ and $\rmS_2$. Furthermore, we set the path loss
factor as four in order to model radio propagation in urban areas
\cite{rappaport}. As a result, the values of $\Omega_g$,
$\Omega_h$ and $\Omega_f$ equal to one, $D_1^{-4}$, and
$(1-D_1)^{-4}$, respectively.

In Fig.\ \ref{fig:tradeoff}, we compare the diversity-multiplexing
tradeoffs of the ANC and TDBC protocols. For each protocol, its
finite-SNR diversity-multiplexing tradeoff indeed converges to the
infinite-SNR case when $E$ goes to infinity. When $E$ is fixed,
the tradeoff curves of those two protocols have a cross point at
approximately $r=1/2$ as expected by Corollary
\ref{cor:crosspoint}. Fig.\ \ref{fig:tradeoff2} shows the values
of $Q(E)$, and hence, it demonstrates when the ANC and TDBC
protocols have the same diversity gain. One can see that our
approximate solution $\tilde{Q}(E)$ is extremely tight to the
exact one $Q(E)$ which is obtained by a numerical method.

In Fig.\ \ref{fig:outagedis}, we compare the exact outage
probabilities of the ANC and TDBC protocols with our lower bounds
given in (\ref{eqn:outANC}) and (\ref{eqn:outTDBC}). It can be
seen that the lower bounds are extremely tight for both protocols
even when we change the location of the relay and the value of
multiplexing gain. Furthermore, in Figs.\ \ref{fig:outagpower1}
and \ref{fig:outagpower2}, we see that our lower bounds are
constantly tight in the whole SNR range. In Fig.\
\ref{fig:outagpower1}, we let the multiplexing gain equal to zero.
As a result, the TDBC protocol achieves a higher diversity gain
than the ANC protocol, which coincides with our conclusion drawn
from Fig.\ \ref{fig:tradeoff}. On the other hand, we let the
multiplexing gain equal to $0.6$ in Fig.\ \ref{fig:outagpower2}
and show that the ANC protocol has a higher diversity gain for
this case.

In Figs.\ \ref{fig:outagpower1} and \ref{fig:outagpower2}, one can
also see that the optimum power allocation scheme of the ANC
protocol substantially reduces the outage probability even when
the relay is exactly in the middle of the two sources. For the
TDBC protocol, its outage probability is considerably reduced as
well when the proposed combing method given in (\ref{eqn:xitdbc1})
is implemented with equal power allocation. Moreover, by letting
$\xi$ equal to (\ref{eqn:xitdbc1}) and adopting equal power
allocation, we can achieve almost the same outage probability as
jointly optimizing the value of $\xi$ and the transmission powers
by numerical ways.

In Fig.\ \ref{fig:ratedis}, we show that the optimum power
allocation scheme can considerably increase the total mutual
information of the ANC protocol especially when $\rmR$ is close to
either $\rmS_1$ or $\rmS_2$. Moreover, the proposed optimum
combing method with equal power allocation can greatly increase
the total mutual information of the TDBC protocol as well, but the
improvement is not as considerable as that achieved by joint
optimization of $\xi$ and the transmission powers.

\section{Conclusion}\label{sec:conl}

This paper studies the ANC and TDBC protocols which are used to
achieve information exchange in bidirectional cooperative
networks. We derive lower bounds of the outage probabilities of
those two protocols. The lower bounds are extremely tight in the
whole SNR range irrespective of the values of channel variances.
Therefore, based on those lower bounds, we derive the finite-SNR
diversity-multiplexing tradeoffs of the ANC and TDBC protocols.
Furthermore, we propose an optimum power allocation scheme for the
ANC protocol. This scheme can simultaneously minimize the outage
probability and maximize the total mutual information of the ANC
protocol. For the TDBC protocol, we develop an optimum combing
method for the relay terminal under an equal power allocation
assumption. This method substantially reduces the outage
probability and increases the total mutual information as well.

%%%%%%%%%%%%%%%%%%%%%%%%%%%%%%%%%%%%%%%%%%%%%%%%%%%%%%%%%%%%%%%%%%%%%%%%%%%%%5

\section*{Appendix A}%\label{app:bercmrcmu}
\renewcommand\theequation{A.\arabic{equation}}
\setcounter{equation}{0}
\begin{center}
Proof of Lemma \ref{thm:ancoutage}
\end{center}

Let $X=|h|^2$ and $Y=|f|^2$. Thus, $X$ and $Y$ are exponential
random variables with means $\Omega_h$ and $\Omega_f$,
respectively. By using the inequality (\ref{eqn:hormonic}), the
outage probability can be lower-bounded as follows:
\begin{eqnarray}
P_{\ANC}^{\outage}(R) &=& \Pr\left(I_1^\ANC<\frac{R}{2} \ \ {\mathrm{or}} \ \ I_2^\ANC<\frac{R}{2}\right)\\
&=&1-\Pr\left(I_1^\ANC>\frac{R}{2}, I_2^\ANC>\frac{R}{2}\right)\\
&=&1-\Pr\left(\gamma_1^\ANC>2^{R}-1, \gamma_1^\ANC<\gamma_2^\ANC
\right)\nonumber\\
&&-\Pr\left(\gamma_2^\ANC>2^{R}-1,
\gamma_2^\ANC<\gamma_1^\ANC \right)\\
&>&1-\Pr\left(\frac{E}{2}\min(2X,Y)>2^{R}-1,\min(2X,Y)<\min(2Y,X)\right)\nonumber\\
&&-\Pr\left(\frac{E}{2}\min(2Y,X)>2^{R}-1,\min(2Y,X)<\min(2X,Y)\right).\label{eqn:outANC1}
\end{eqnarray}
The first probability in (\ref{eqn:outANC1}) can be evaluated in
the following way
\begin{eqnarray}
&&\Pr\left(\frac{E}{2}\min(2X,Y)>2^{R}-1,\min(2X,Y)<\min(2Y,X)\right)\nonumber\\
&=&\Pr\left(Y>\frac{2\left(2^{R}-1\right)}{E},\min(2Y,X)>Y,2X>Y\right)\\
&=&\Pr\left(Y>\frac{2\left(2^{R}-1\right)}{E},X>2Y\right)%\nonumber\\
+\Pr\left(Y>\frac{2\left(2^{R}-1\right)}{E},2Y>X>Y\right)\\
&=&\Pr\left(Y>\frac{2\left(2^{R}-1\right)}{E},X>Y\right)\\
&=&\frac{\Omega_h}{\Omega_h+\Omega_f}\exp\left(
-\frac{2\left(\Omega_h+\Omega_f\right)\left(2^{R}-1\right)}{E\Omega_h\Omega_f}\right),\label{eqn:outANCp1}
\end{eqnarray}
where the integrations involved in the last step can be solved by
\cite{int}. Similarly, the second probability in
(\ref{eqn:outANC1}) can be solved as follows:
\begin{eqnarray}\label{eqn:outANCp2}
&&\Pr\left(\frac{E}{2}\min(2Y,X)>2^{R}-1,\min(2Y,X)<\min(2X,Y)\right)\nonumber\\
&=&\frac{\Omega_f}{\Omega_h+\Omega_f}\exp\left(
-\frac{2\left(\Omega_h+\Omega_f\right)\left(2^{R}-1\right)}{E\Omega_h\Omega_f}\right).
\end{eqnarray}
By substituting (\ref{eqn:outANCp1}) and (\ref{eqn:outANCp2}) into
(\ref{eqn:outANC1}), we obtain the lower bound
$\tilde{P}_\ANC^\outage(R)$ of the outage probability
$P_\ANC^\outage(R)$.

\section*{Appendix B}%\label{app:bercmrcmu}
\renewcommand\theequation{B.\arabic{equation}}
\setcounter{equation}{0}
\begin{center}
Proof of Lemma \ref{thm:outTDBC}
\end{center}

Let $X=|h|^2$, $Y=|f|^2$, and $Z=|g|^2$. Thus, $X$, $Y$, and $Z$
are exponential random variables with means $\Omega_h$,
$\Omega_f$, and $\Omega_g$, respectively. By using the inequality
in (\ref{eqn:hormonic}), the outage probability is lower-bounded
by
\begin{eqnarray}
P_\TDBC^\outage(R) &=& \Pr\left(I_1^\TDBC<\frac{R}{2} \ \
\mathrm{or} \ \
I_2^\TDBC<\frac{R}{2}\right)\\
&=&1-\Pr\left(I_1^\TDBC>\frac{R}{2}, I_2^\TDBC>\frac{R}{2}\right)\\
&=&1-\Pr\left(\gamma_1^\TDBC>2^{1.5R}-1,
\gamma_1^\TDBC<\gamma_2^\TDBC
\right)\nonumber\\
&&-\Pr\left(\gamma_2^\TDBC>2^{1.5R}-1,
\gamma_2^\TDBC<\gamma_1^\TDBC \right)\\
&>&1-\Pr\left(\frac{E}{3}\min(3Y,X)+EZ>2^{1.5R}-1,\min(3Y,X)<\min(3X,Y)\right)\nonumber\\
&&-\Pr\left(\frac{E}{3}\min(3X,Y)+EZ>2^{1.5R}-1,\min(3X,Y)<\min(3Y,X)\right).\label{eqn:outTDBC1}
\end{eqnarray}

The first probability in (\ref{eqn:outTDBC1}) can be solved as
follows:
\begin{eqnarray}
&&\Pr\left(\frac{E}{3}\min(3Y,X)+EZ>2^{1.5R}-1,\min(3Y,X)<\min(3X,Y)\right)\nonumber\\
&=&\Pr\left(\frac{1}{3}X+Z>\frac{2^{1.5R}-1}{E},X<\min(3X,Y),X<3Y\right)\nonumber\\
&&+\Pr\left(aY+Z>\frac{2^{1.5R}-1}{E},3Y<\min(3X,Y),X>3Y\right)\\
&=&\Pr\left(\frac{1}{3}X+Z>\frac{2^{1.5R}-1}{E},X<3Y,3X<Y\right)\nonumber\\
&&+\Pr\left(\frac{1}{3}X+Z>\frac{2^{1.5R}-1}{E},X<Y,X<3Y,3X>Y\right)\\
&=&\Pr\left(\frac{1}{3}X+Z>\frac{2^{1.5R}-1}{E},X<\frac{Y}{3}\right)\nonumber\\
&&+\Pr\left(\frac{1}{3}X+Z>\frac{2^{1.5R}-1}{E},\frac{Y}{3}<X<Y\right)\\
&=&\Pr\left(\frac{1}{3}X+Z>\frac{2^{1.5R}-1}{E},X<Y\right)\\
&=&\frac{\Omega_f}{(3\Omega_f\Omega_g+3\Omega_h\Omega_g-\Omega_h\Omega_f)(\Omega_h+\Omega_f)}\left(
3\Omega_g\left(\Omega_f+\Omega_h\right)\exp\left(-\frac{2^{1.5R}-1}{E\Omega_g}\right)\right.\nonumber\\
&&\left.-\Omega_h\Omega_f\exp\left(-\frac{3(2^{1.5R}-1)(\Omega_h+\Omega_f)}{E\Omega_h\Omega_f}\right)
\right).\label{eqn:tdbcprob1}
\end{eqnarray}
where the integrations involved in the last step can be solved by
\cite{int}. Similarly, the second integration can be solved in the
following way
\begin{eqnarray}
&&\Pr\left(\frac{E}{3}\min(3X,Y)+EZ>2^{1.5R}-1,\min(3X,Y)<\min(3Y,X)\right)\nonumber\\
&=&\frac{\Omega_h}{(3\Omega_f\Omega_g+3\Omega_h\Omega_g-\Omega_h\Omega_f)(\Omega_h+\Omega_f)}\left(
3\Omega_g\left(\Omega_f+\Omega_h\right)\exp\left(-\frac{2^{1.5R}-1}{E\Omega_g}\right)\right.\nonumber\\
&&\left.-\Omega_h\Omega_f\exp\left(-\frac{3(2^{1.5R}-1)(\Omega_h+\Omega_f)}{E\Omega_h\Omega_f}\right)
\right).\label{eqn:tdbcprob2}
\end{eqnarray}
Based on (\ref{eqn:outTDBC1}), (\ref{eqn:tdbcprob1}), and
(\ref{eqn:tdbcprob2}), the outage probability of the TDBC protocol
is lower-bounded by $\tilde{P}_\TDBC^\outage(R)$.

\section*{Appendix C}%\label{app:bercmrcmu}
\renewcommand\theequation{C.\arabic{equation}}
\setcounter{equation}{0}
\begin{center}
Proof of Theorem \ref{thm:ancpower1}
\end{center}

We let $E_1=3\alpha E$, $E_2=3\beta E$, and $E_r =
3(1-\alpha-\beta)E$, where $0\leq \alpha\leq 1$, $0\leq \beta\leq
1$, and $\alpha+\beta\leq 1$. Consequently, $\gamma_1^\ANC$ in and
$\gamma_2^\ANC$ in (\ref{eqn:ancsnr12gen}) are approximated by
\begin{eqnarray}
\gamma_1^\ANC \approx
3|hf|^2E\frac{\beta(1-\alpha-\beta)}{|h|^2(1-\beta)+|f|^2\beta},&&
\gamma_2^\ANC \approx
3|hf|^2E\frac{\alpha(1-\alpha-\beta)}{|f|^2(1-\alpha)+|h|^2\alpha}.
\end{eqnarray}
We follow the method given in Section II.C of \cite{poor} to solve
the minimax problem in (\ref{eqn:ancminmax}). Thus, a new function
$M(\pi_0,\alpha,\beta)$ is defined as
\begin{equation}
M(\pi_0,\alpha,\beta) =
\pi_0\gamma_1^\ANC+(1-\pi_0)\gamma_2^\ANC,\ \ \ 0\leq \pi_0\leq 1.
\end{equation}
Let $\alpha(\pi_0)$ and $\beta(\pi_0)$ denote the values of
$\alpha$ and $\beta$, respectively, which maximize
$M(\pi_0,\alpha,\beta)$ for a fixed $\pi_0$. According to
\cite{poor}, the solution to (\ref{eqn:ancminmax}) must belong to
the set formed by $\alpha(\pi_0)$ and $\beta(\pi_0)$, i.e. it must
maximize $M(\pi_0,\alpha,\beta)$.

We try to find $\alpha(\pi_0)$ and $\beta(\pi_0)$ in the
following. When $\pi_0\neq 1/2$, it can be shown that the values
of $\alpha(\pi_0)$ and $\beta(\pi_0)$ are given by
\begin{eqnarray}\label{eqn:nothalf}
\left\{%
\begin{array}{ccl}
  \alpha(\pi_0) = 0, &
  \beta(\pi_0)=\frac{|h|}{|h|+|f|}, & {\mathrm{when}}\ 0\leq\pi_0<1/2\\
  \beta(\pi_0) = 0, &
  \alpha(\pi_0)=\frac{|f|}{|h|+|f|}, & {\mathrm{when}}\ 1/2<\pi_0\leq 1
\end{array}%
\right..
\end{eqnarray}
When either $\alpha=0$ or $\beta=0$, however, it follows from
(\ref{eqn:ancr1r2}) and (\ref{eqn:ancout}) that the outage
probability is one. This implies that the solution in
(\ref{eqn:nothalf}) can not be the solution to
(\ref{eqn:ancminmax}).

Consequently, the solution to (\ref{eqn:ancminmax}) can only be
found at $\pi_0=1/2$. For this case, we first find that
$\alpha(1/2)$ and $\beta(1/2)$ are not unique. Specifically,
$\beta(1/2)$ can be any real number between zero and one, and the
value of $\alpha(1/2)$ depends on $\beta(1/2)$ in the following
way
\begin{eqnarray}\label{eqn:alpha1}
\alpha(1/2) =
\frac{|f|\left(|hf|-(|f|^2\beta+|h|^2(1-\beta(1/2))\right)}{|h|(|f|^2-|h|^2)}.
\end{eqnarray}
On the other hand, it has been shown in \cite{poor} that the
solution to (\ref{eqn:ancminmax})
should make $\gamma_1^\ANC=\gamma_2^\ANC$. %The value of $\alpha$
%should satisfy the following equation
%\begin{equation}\label{eqn:alpha2}
%\alpha = \frac{|f|^2\beta}{2|f|^2\beta-2|h|^2\beta+|h|^2}.
%\end{equation}
Based on this condition and (\ref{eqn:alpha1}), we should choose
$\alpha(1/2)$ and $\beta(1/2)$ as follows:
\begin{eqnarray}
\alpha(1/2) = \frac{|f|}{2(|h|+|f|)},&& \beta(1/2) =
\frac{|h|}{2(|h|+|f|)}.
\end{eqnarray}
Therefore, the optimum power allocation scheme that minimizes the
outage probability of the ANC protocol is given by
(\ref{eqn:ancer})--(\ref{eqn:ance2}).

\section*{Appendix D}%\label{app:bercmrcmu}
\renewcommand\theequation{D.\arabic{equation}}
\setcounter{equation}{0}
\begin{center}
Proof of Theorem \ref{thm:tdbcout}
\end{center}

In this proof, we rewrite $\gamma_1^\TDBC$ and $\gamma_2^\TDBC$ as
$\gamma_1^\TDBC(\xi)$ and $\gamma_2^\TDBC(\xi)$ in order to
emphasize their dependence on $\xi$. Furthermore, we construct a
new function $W(\pi_0,\xi)$ as follows:
\begin{equation}
W(\pi_0,\xi) =
\pi_0\gamma_1^\TDBC(\xi)+(1-\pi_0)\gamma_2^\TDBC(\xi),\ \ \ 0\leq
\pi_0\leq 1.
\end{equation}
Let $\xi(\pi_0)$ denote the value of $\xi$ that maximizes
$W(\pi_0,\xi)$ for a fixed $\pi_0$. Let
$V(\pi_0)=W(\pi_0,\xi(\pi_0))$ and let $\pi_L$ denote the solution
to $\arg\min_{0\leq \pi_0\leq 1} V(\pi_0)$. It has been shown that
the solution to (\ref{eqn:tdbcminmax2}) must be from the set
formed by $\xi(\pi_0)$ \cite{poor}. Furthermore, it follows from
Proposition II.C.1 in \cite{poor} that $\xi(\pi_L)$ is the
solution to (\ref{eqn:tdbcminmax2}) either
$\gamma_1^\TDBC(\xi(\pi_L))=\gamma_2^\TDBC(\xi(\pi_L))$,
$\pi_L=0$, or $\pi_L=1$.

As a result, in order to find the optimum value of $\xi$, we can
first investigate the equation
$\gamma_1^\TDBC(\xi)=\gamma_2^\TDBC(\xi)$. If a solution exits for
this equation, such solution must be the solution to
(\ref{eqn:tdbcminmax2}). Fortunately, the equation
$\gamma_1^\TDBC(\xi)=\gamma_2^\TDBC(\xi)$ always have a solution
under the assumption that every terminal has the same transmission
power $E$. Specifically, it can be easily shown that
$\gamma_1^\TDBC(\xi)$ is a strictly decreasing function of $\xi$;
while, $\gamma_2^\TDBC(\xi)$ is a strictly increasing function of
$\xi$. Furthermore, $\gamma_1^\TDBC(0)=\gamma_2^\TDBC(1)$ and
$\gamma_1^\TDBC(1)=\gamma_2^\TDBC(0)$. As result, the two
functions $\gamma_1^\TDBC(\xi)$ and $\gamma_2^\TDBC(\xi)$ always
have one and only one crossing point in the range $0\leq \xi \leq
1$, i.e. the equation $\gamma_1^\TDBC(\xi)=\gamma_2^\TDBC(\xi)$
always has a solution. Such solution is given by $|h|/(|h|+ |f|)$,
and hence, the solution to (\ref{eqn:tdbcminmax2}) is given by
(\ref{eqn:xitdbc1}).

%\clearpage
%\newpage

\clearpage
\newpage
\listoffigures

\clearpage
\newpage

\begin{figure}
\begin{center}
\subpostscript{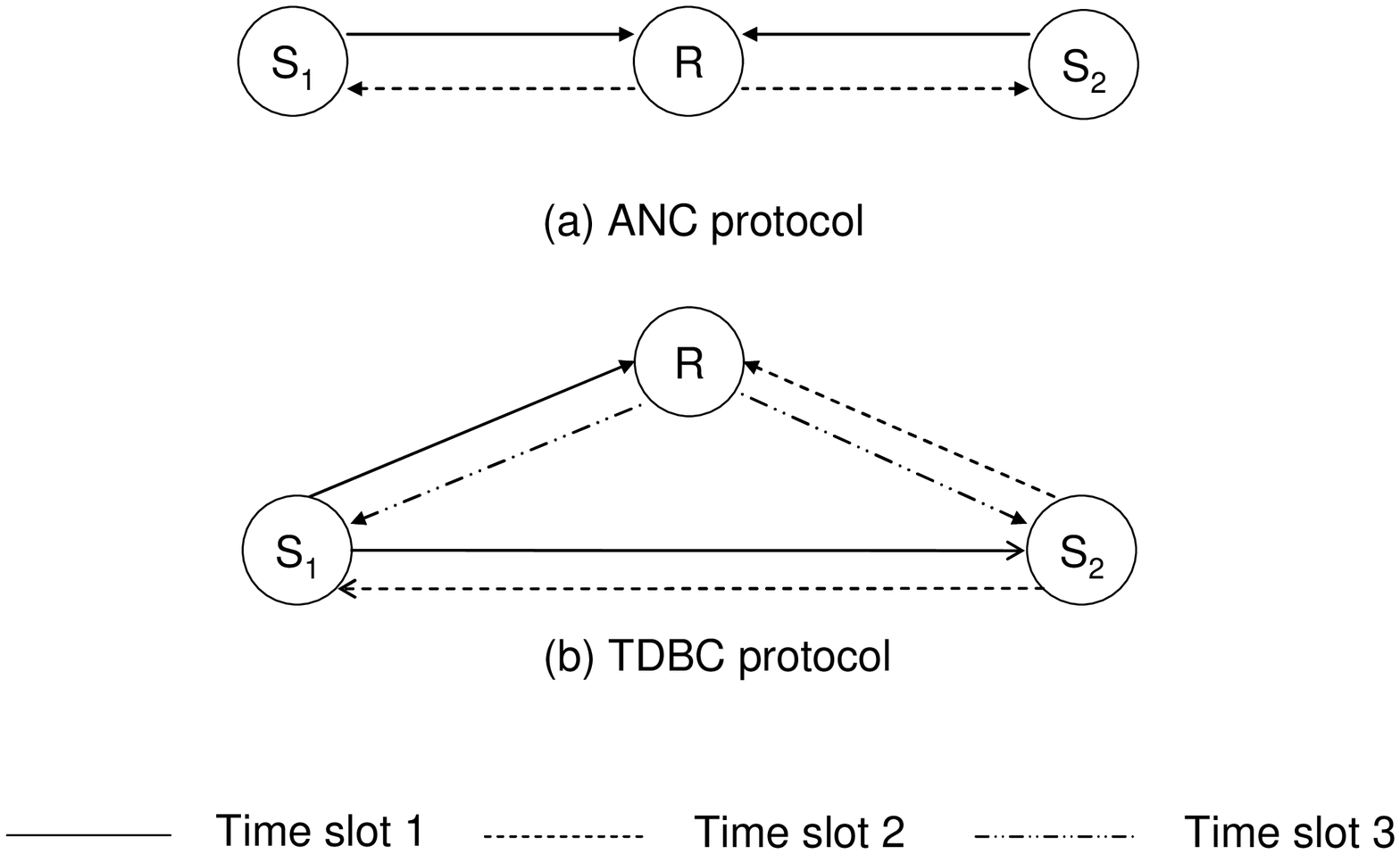}{0.9\textwidth}
\end{center}
\caption{System models of the ANC and TDBC protocols.}
\label{fig:model}
\end{figure}

\clearpage
\newpage
\begin{figure}
\begin{center}
\subpostscript{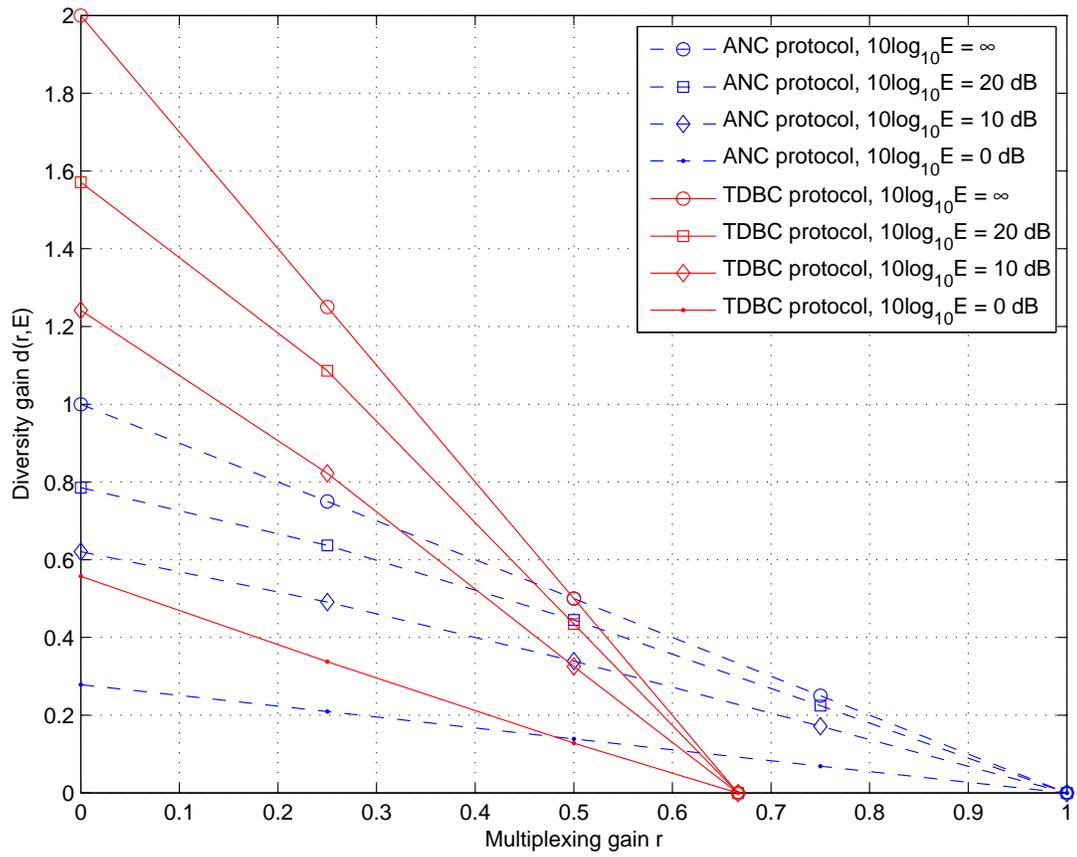}{0.9\textwidth}
\end{center}
\caption{Diversity-multiplexing tradeoffs of the ANC and TDBC
protocols, $D_1=0.5$.} \label{fig:tradeoff}
\end{figure}

\clearpage
\newpage
\begin{figure}
\begin{center}
\subpostscript{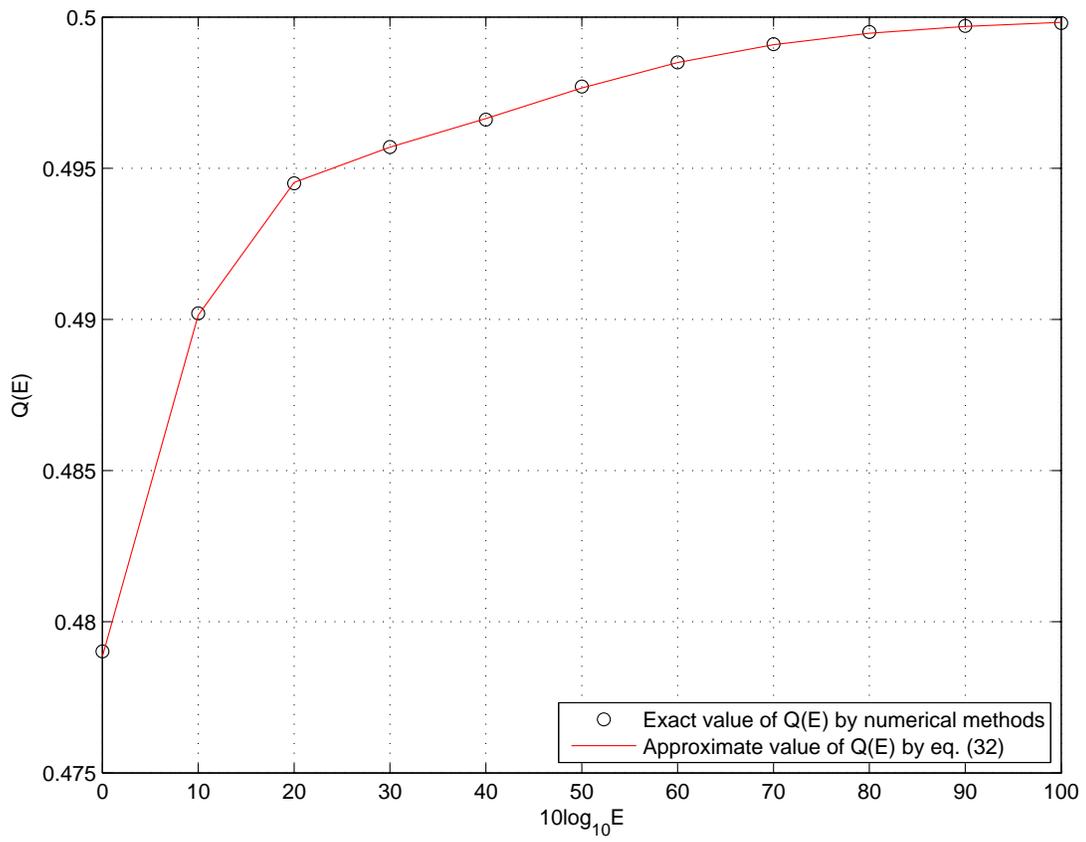}{0.9\textwidth}
\end{center}
\caption{Values of $Q(E)$, $D_1=0.5$.} \label{fig:tradeoff2}
\end{figure}

\clearpage
\newpage
\begin{figure}
\begin{center}
\subpostscript{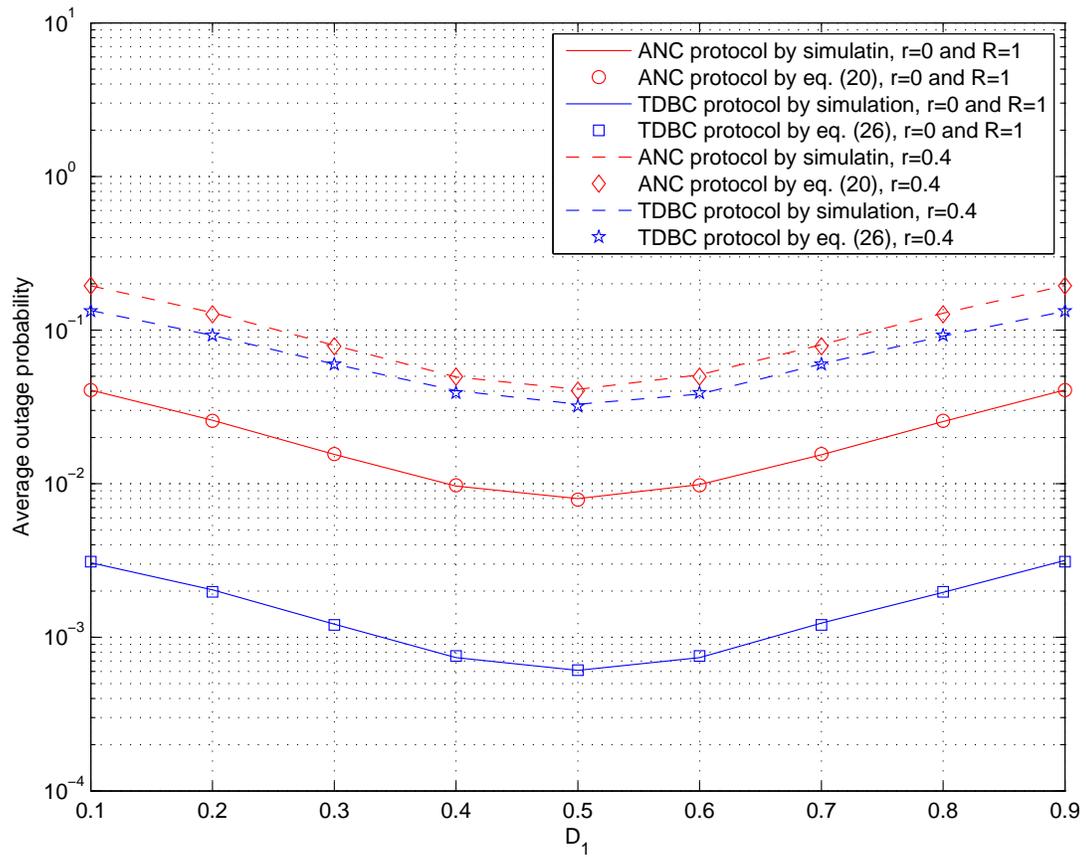}{0.9\textwidth}
\end{center}
\caption{Outage probabilities of the ANC and TDBC protocols with
equal power allocation and $\xi=1/2$, $10\log_{10}E=15$ dB.}
\label{fig:outagedis}
\end{figure}

\clearpage
\newpage
\begin{figure}
\begin{center}
\subpostscript{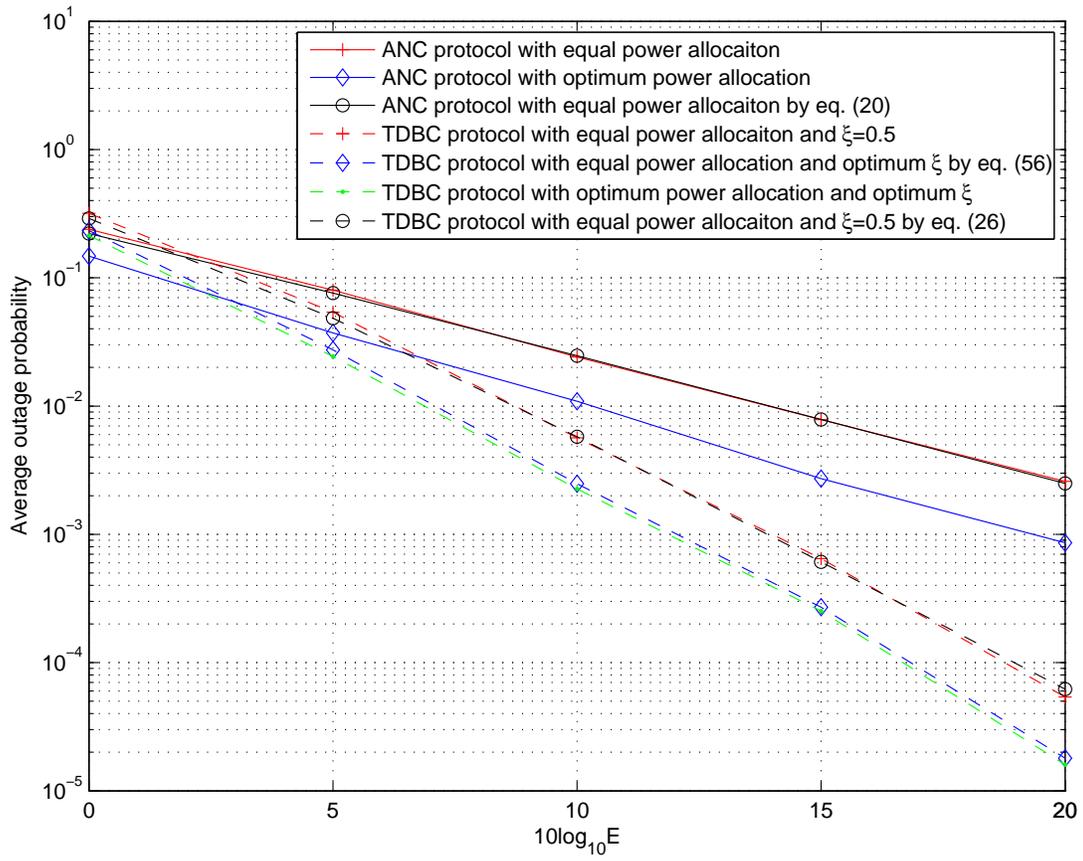}{0.9\textwidth}
\end{center}
\caption{Outage probabilities of the ANC and TDBC protocols,
$D_1=0.5$, $r=0$, and $R=1$ bps/Hz.} \label{fig:outagpower1}
\end{figure}

\clearpage
\newpage
\begin{figure}
\begin{center}
\subpostscript{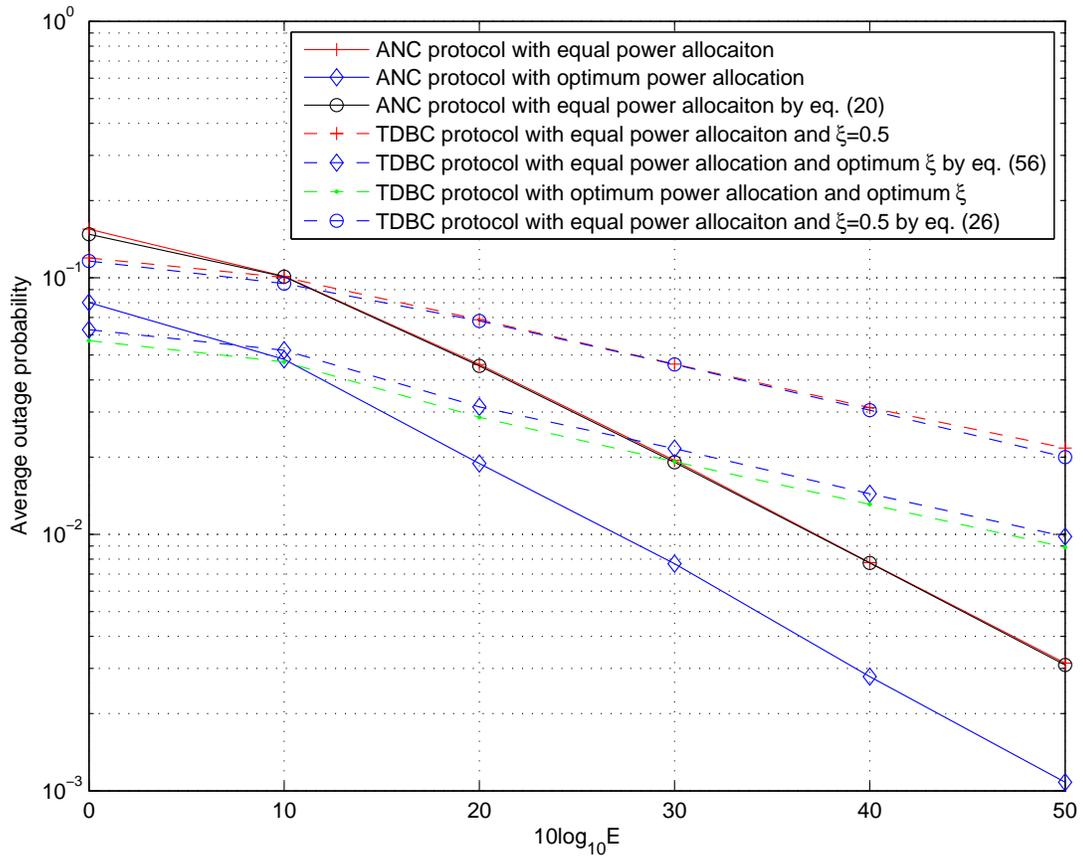}{0.9\textwidth}
\end{center}
\caption{Outage probabilities of the ANC and TDBC protocols,
$D_1=0.4$ and $r=0.6$.} \label{fig:outagpower2}
\end{figure}

\clearpage
\newpage
\begin{figure}
\begin{center}
\subpostscript{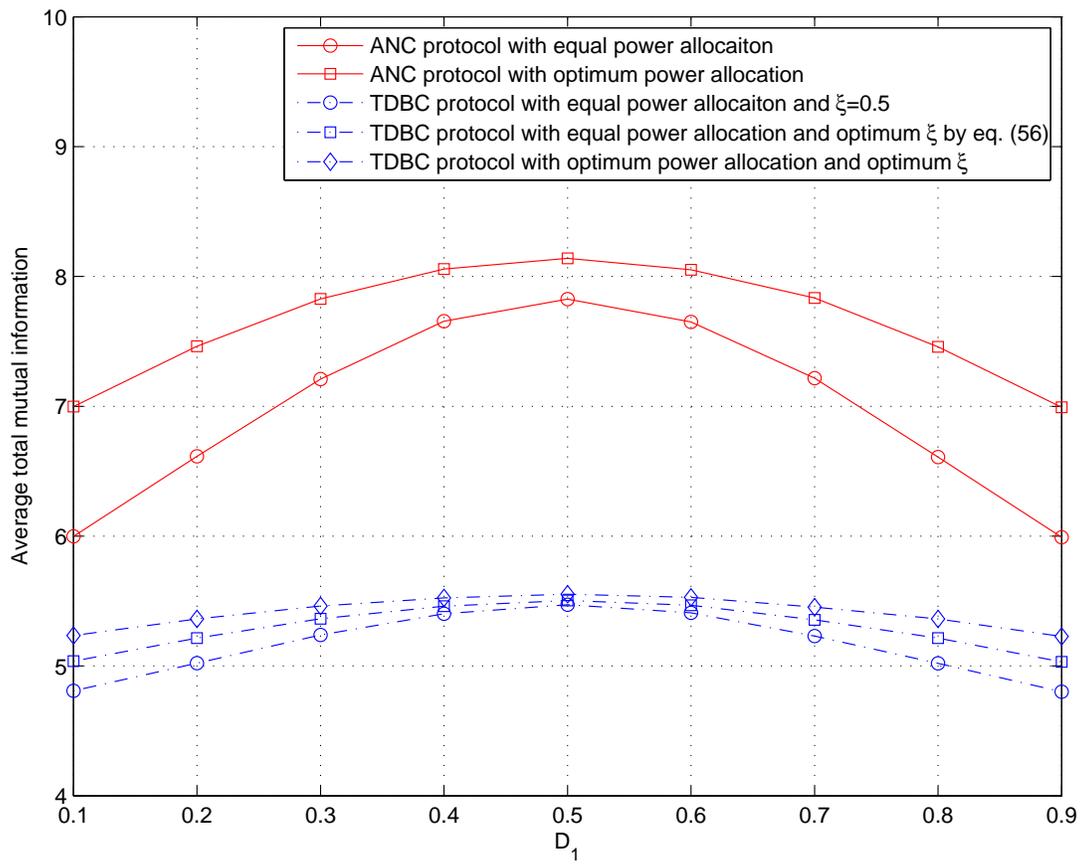}{0.9\textwidth}
\end{center}
\caption{Total mutual information of the ANC and TDBC protocols,
$10\log_{10}E=20$ dB.} \label{fig:ratedis}
\end{figure}

\end{document}